\def\subparagraph{\paragraph}
\titlespacing*{\section}{0pt}{*1}{*1}
\titlespacing*{\subsection}{0pt}{*1}{*0}
\titlespacing*{\subsubsection}{0pt}{*1}{*0}
\newcolumntype{L}[1]{>{\raggedright\let\newline\\\arraybackslash\hspace{0pt}}m{#1}}
\newcolumntype{C}[1]{>{\centering\let\newline\\\arraybackslash\hspace{0pt}}m{#1}}
\newcolumntype{R}[1]{>{\raggedleft\let\newline\\\arraybackslash\hspace{0pt}}m{#1}}
\newcommand{\SET}[1]{\ensuremath{\mathcal{#1}}}               
\newcommand{\EV}[1]{\ensuremath{\mathscr{#1}}}                
\newcommand{\VEC}[1]{\ensuremath{\mathbf{\bar{#1}}}}                
\def\EPS{\textit{\ensuremath{\epsilon}-Indistinguishability }}
\def\PRI{\textbf{PRI}}
\def\PRIEPS{\textbf{PRI+}}
\def\PDE{\textbf{PDE}}
\newcommand{\FPRI}[1][k]{\ensuremath{\mathbb{M}}_{#1}}
\def\PD{\ensuremath{\mathbb{D}_{k}}}
\newcommand{\WeePD}[1][k]{\ensuremath{\mathbf{d}}_{#1}}
\def\FPD{\ensuremath{\widehat{\epsilon}_{*,k}}}
\def\FPRIHAT{\ensuremath{\widehat{\mathbb{M}}_{k}}}
\def\DENY{{\text{(\ensuremath{\epsilon, m})--Plausible Deniability }}}
\def\4DENY{{\text{(\ensuremath{4\epsilon, m})--Plausible Deniability }}}
\def\U{\ensuremath{\mathcal{U}}}
\def\S{\ensuremath{\mathcal{S}}}
\def\P{\text{\rm P}}
\def\RPLUS{\ensuremath{\mathbb{R}^{+}}}
\newcommand*{\POL}[1]{{\color[rgb]{0.1,0,0.5}\textbf{#1}}}
\newtheorem{theorem}{Theorem}[section]
\newtheorem{lemma}{Lemma}
\newtheorem{proposition}[theorem]{Proposition}
\newtheorem{definition}{Definition}
\newtheorem{assumption}{Assumption}
\newtheorem*{proof}{Proof}
\newcommand{\QED}{\nobreak \ifvmode \relax \else
      \ifdim\lastskip<1.5em \hskip-\lastskip
      \hskip1.5em plus0em minus0.5em \fi \nobreak
      \vrule height0.75em width0.5em depth0.25em\fi}
\begin{document}
\title{Plausible Deniability in Web Search -- From Detection to Assessment}
\author{P\'{o}l Mac Aonghusa and Douglas J. Leith
\thanks{P. Mac Aonghusa is with IBM Research and Trinity College Dublin.}%
\thanks{D.J. Leith is with Trinity College Dublin.}
}%
\markboth{IEEE Transactions on Information Forensics and Security}{Mac Aonghusa and Leith: Plausible Deniability in Web Search -- From Detection to Assessment}%
\maketitle
\begin{abstract}
\POL{Web personalisation uses what systems know about us to create content targeted at our interests. When unwanted personalisation suggests we are interested in sensitive or embarrassing topics a natural reaction is to deny interest. This is a practical response only if denial of our interest is credible or plausible. Adopting a definition of \emph{plausible deniability} in the usual sense of ``on the balance of probabilities'', we develop a practical and scalable tool called \PDE{} allowing a user to decide when their ability to plausibly deny interest in sensitive topics is compromised.} We show that threats to plausible deniability are readily detectable for all topics tested in an extensive testing program. Of particular concern is observation of threats to deniability of interest in topics related to health and sexual preferences. We show this remains the case when attempting to disrupt search engine learning through noise query injection and click obfuscation. We design a defence technique exploiting uninteresting, proxy topics and show that it provides a more effective defence of plausible deniability in our experiments.
\end{abstract}
\begin{IEEEkeywords}
Privacy, Indistinguishability, Plausible Deniability, Recommender Systems, Web Search.
\end{IEEEkeywords}

\section{Introduction}
\label{sec:intro}
\POL{Encountering inappropriate or unwanted personalised online content can be awkward, depending on social context. What may appear humorous in one situation may be embarrassing, or worse, in another context. When presented with content regarded as inappropriate or discreditable, a user may wish to \emph{deny interest} in the content.} 

\POL{The Oxford English Dictionary defines plausible deniability in terms of reasonable doubt as \emph{``the possibility of denying a fact (especially a discreditable action) without arousing suspicion''}, \cite{oxford}. Informally, user activity observed by the search engine exhibits plausible deniability when user activity is consistent with the user interest in \emph{any} one of several topics, at least one of which is not sensitive for the user, with sufficiently high probability. }

\POL{Accordingly, we assess threats to plausible deniability during web search by testing if content appearing on search result pages can be attributed to user interest in a specific sensitive topic, versus user interest in any other topic, on the balance of probabilities. We ask when can a user \emph{plausibly deny} interest in a range of sensitive topics during online web-search sessions?}

We provide guarantees on the best-possible level of plausible deniability a user can expect during web search in our model. We also introduce a new \emph{Plausible Deniability Estimator}, called \PDE{}, that can be used to assess privacy threats. Outputs from \PDE{} can be represented in terms of readily interpretable probabilities thereby providing an informative indication of risk to the user. 

Our methods are chosen to be straightforward to implement using openly available technologies. We use our results to design and assess counter-measures against threats to plausible deniability during online web-search sessions, using the Google Search as a source of data. We are able to assess threats to plausible deniability from sensitive topic learning in a range of potentially sensitive topics, such as health, finance and sexual orientation. 

Our experimental measurements indicate that, by observing as few as 3-5 revealing queries, a search engine can infer a user is interested in a sensitive topic on the balance of probabilities in $100\%$ of topics tested when no effective defence is provided. In the case of topics related to health and sexual preferences measurements from \PDE{} suggest that the probability a user is interested in sensitive topics related to sexual preference is as high as $90\%$ greater than their probability of interest in any other topic. 

We show that defence strategies based on random query injection of random noise queries and misleading click patterns may provide some protection for individual, isolated queries, but that search engines are able to learn quickly. Significant levels of threat to plausible deniability are detected even when very high levels of random noise are included in the query session or when misleading click patterns are used. These approaches seem to offer little or no improvement to user privacy when considering plausible deniability over the longer term.

In contrast, we find that a defence employing topics that are commercially relevant but uninteresting to the user as \emph{proxy topics} is effective in protecting plausible deniability in the case of $100\%$ of sensitive topics tested. The proxy topic defence differs from traditional obfuscation approaches in actively exploiting the observed ability of the search engine to learn topics quickly, deflecting the focus of interest toward the proxy topic and away from the true topic of interest to the user.

The proxy topic defence works in our experiments, and is simple to apply. However it is important to recognise that we are faced with commercially motivated and increasingly powerful systems with a history of adapting quickly. Our results suggest that search engine capability is continuously evolving so that we can reasonably expect search engines to respond to privacy defences with more sophisticated learning strategies.  Our results also point towards the fact that the text in search queries plays a key role in search engine learning. While perhaps obvious, this observation reinforces the user's need to be circumspect about the queries they ask if they want to avoid search engine learning of their interests. Equally, our results suggest that simple countermeasures, such as proxy topics, that make accurate personalisation more expensive for the online system represent a promising approach in developing new techniques for practical user privacy. 

\section{Related Work}
\label{sec:related:work}
We model a search engine as a black-box by making minimal assumptions about its internal workings. The technique of using predefined \emph{probe queries}, injected at intervals into a stream of true user queries as fixed sampling points, was used in \cite{mac2015don} \POL{where the focus of the paper was detection of possible privacy threats. Extending the idea of probe queries, discussed in \cite{mac2015don}, several new applications are presented in this current paper such as the model of plausible deniability and the associated \PDE{} estimator, the proxy topic defence model and the evaluation of multiple noise and click models for each of these.}
The technique of using predefined probe queries  is borrowed from black-box testing. Modelling an adversary as a black-box, where internal details of recommender systems algorithms and settings are unknown to users, is mentioned in \cite{datta2014privacy} and \cite{Hannak:2013:MPW:2488388.2488435}.

The importance of control over appropriate flow of information is discussed extensively in legal and social science fields. Individual control over personal information flow is discussed in a critique of the \emph{nothing to hide} defence for widespread surveillance in \cite{Solove:2007}. Individual privacy and its social consequences are discussed in \cite{Nissenbaum:2009:PCT:1822585,boyd2011talk}, where agency or control over appropriate disclosure is identified as a key concern.

Plausible deniability as a privacy defence for web search is addressed in the literature. In \cite{arampatzis2013versatile} alternative, less revealing queries are mixed with sensitive topic queries to obfuscate true user interest. In \cite{Arampatzis:2011} queries with generalised terms are used to approximate the search results of a true query, which is never revealed. Plausible deniability for database release has been studied in the context of user data anonymization.  For example, in \cite{DBLP:journals/pvldb/BindschaedlerSG17} a definition of plausible deniability is applied to examine mechanisms for differentially private data set release. More generally, plausible deniability to counteract the impact of personalisation is examined in \cite{cummings2016empirical} for the case of a privacy aware user who knows they are being observed. The authors show that no matter what the behaviour of the user is, it is always compatible with some concern over privacy. In this way the user can offer their awareness of privacy concerns as a general alibi to justify any range of preferences. Plausible deniability for providers of online services is also discussed in the literature. For example, in  \cite{vera2014doccloud} a distributed virtual machine infrastructure is used to provide deniability to online data providers by obfuscating the origin of index data used in recommendations.

The potential of online profiling and personalisation resulting in censorship and discrimination have received growing attention in the research literature. Personalisation as a form of censorship --  termed a filter bubble in \cite{Pariser:2011:FBI:2029079} -- is explored in \cite{Hannak:2013:MPW:2488388.2488435}. In a filter bubble, a user cannot access subsets of information because the recommender system algorithm has decided it is irrelevant for that user. In \cite{Hannak:2013:MPW:2488388.2488435} a filter bubble effect was detected in the case of Google Web Search in a test with 200 users.  Discrimination associated with personalisation has been shown for topics generally regarded as sensitive. In \cite{Sweeney:2013:DOA:2460276.2460278} an extensive review of adverts from Google and \url{Reuters.com} showed a strong correlation between adverts suggestive of an arrest record, and, an individual's ethnicity. In \cite{Guha:2010:CMO:1879141.1879152}, the authors used online advertising targeted {exclusively} to gay men to demonstrate strong profiling in the case of sexual preference. 

Several approaches exist for obfuscating user interactions with search engines with the aim of disrupting online profiling and personalisation.  GooPIR, \cite{Domingo-Ferrer:2009h,SaNchez:2013:KSC:2383079.2383150}, attempts to disguise a user's ``true''  queries by adding masking keywords directly into a true query before submitting to a recommender system. Results are then filtered to extract items that are relevant to the user's original true query.  PWS, \cite{balsa2012ob}, and TrackMeNot, \cite{howe2009trackmenot,peddinti2011limitations}, inject distinct noise queries into the stream of true user queries during a user query session, seeking to achieve an acceptable level of anonymity while not overly upsetting overall utility. Search engine algorithm evolution regarded as a continuous ``arms-race'', is evidenced in the case of Google, for example, by major algorithm changes such as \emph{Caffeine} and \emph{Search+ Your World} have included additional sources of background knowledge from Social Media, improved filtering of content such as \emph{Panda} to counter spam and content manipulation, most recently semantic search capability has been added through \emph{Knowledge Graph} and \emph{HummingBird}, \cite{search_timeline_1}, \cite{search_history}, \cite{search_history_2}.

\POL{Consent to share data for agreed purposes is critical to user trust in service providers and is a key feature of the EU General Data Privacy Regulation (GDPR), \cite{GDPR:2016}. 
Several notable browser add-ons, such as Mozilla Lightbeam, \cite{lightbeam}, and PrivacyBadger, \cite{badger}, facilitate more active user awareness of possible consent issues by helping a user understand where their data is shared with third parties through the sites they visit. XRay, \cite{DBLP:journals/corr/LecuyerDLPPSCG14}, reports high accuracy in identifying which sources of user data such as email or web search history might have {triggered} particular results from online services such as adverts. Active consensual sharing of personal data is investigated in  \cite{Fredrikson:2011} through an in-browser capability, called RePriv, allowing a user to select which portions of their personal data they wish to share with requesters. Both PrivAd, \cite{Guha:2011}, and Adnostic, \cite{Liu:2016} investigate safe profiling through generalisation of user interests in the browser. Both Adnostic and PrivAd  seek to protect the true interests of the user by obfuscating and filtering personalised content through a published interface.}

Evaluation of the effectiveness of privacy defences in the wild was performed by \cite{peddinti2010privacy} in the case of \emph{TrackMeNot} where the authors demonstrate that by using only a short-term history of search queries it is possible to break the privacy guarantees of TrackMeNot using readily available machine-learning classifiers. The importance of background information in user profiling is explored in \cite{Petit2016}. Here a similarity metric measuring distance between \emph{known} background information about a user, given by query history, and subsequent queries is shown to identify 45.3\% of TrackMeNot and 51.6\% of GooPIR queries. \POL{Anti-tracking is an ongoing area of research and recently in \cite{DBLP:conf/ndss/PanCC15} an anti-tracking browser called TrackingFree was reported to be effective at disrupting all of the trackers in the Alexa top-500 list}. Self-regulation has also proven problematic, in \cite{balebako2012measuring}, six different privacy tools, intended to limit advertising due to behavioural profiling, are assessed. The tools assessed implement a variety of tactics including cookie blocking, site blacklisting and Do-Not-Track (DNT) headers. DNT headers were found to be ineffective in tests at protecting against adverts based on user profiling. 

Examples of unsubstantiated and misleading claims by providers of technology to enhance individual privacy are common, \cite{zdnet:charlatans,zdnet:fake}. Concerns about objective evaluation of the claims by providers of such technologies have attracted the attention of Government, where the need for \emph{``Awareness and education of the users \ldots ''} is identified in \cite{enisa:guide} as a key step to building trust and acceptance of privacy technologies for individuals. Accountability and enforcement of accountability for privacy policy is also attracting attention. Regulatory requirements for data handling in industries such as Healthcare (HIPPA) and Finance (GLBA) are well established. The position with respect to handling of data collected by online recommender systems is less clear. In \cite{datta2014privacy}, the author reviews computational approaches to specification and enforcement of privacy policies at large scale.

\POL{Our contribution in this paper is orthogonal to the contributions in the works discussed here. We address the complimentary challenge of privacy monitoring by detecting possible inappropriate use of personal user data by observing personalised outputs. In this respect our approach can be deployed in conjunction with the technologies mentioned.} 

\section{General Setup}
\label{sec:gen:setup}
\subsection{Threat Model}
\label{sec:notation} 

The setup we consider is that of a general user of a commercial, for-profit online search engine. The relationship between the user, denoted \U{}, and the online system, denoted by \S{}, is based on mutual utility where both parties obtain something useful from the interaction -- \U{} gets useful information and recommendations -- while \S{} gets an opportunity to ``up-sell'' to \U{} through targeted content such as advertising. As a commercial business, \S{} recognises cost per user interaction and responsiveness of service are critical to competitiveness. Accordingly content based on user profiling is intended to adapt dynamically to the changing interests of \U{}. \U{} is generally informed regarding  good personal privacy practice and is alert to unwanted or embarrassing personalisation. When \U{} detects threats to her privacy she wishes to assess her ability to plausibly deny her interest in compromising content to avoid awkward social implications. The relationship between \U{} and \S{} is generally described as ``honest but curious''  in the literature. Accordingly we will refer to \S{} as an \emph{observer} rather than the more traditional \emph{adversary}.

Let $\{c_1, \ldots, c_N\}$ denote a set of sensitive categories of interest to \U{}, \emph{e.g.} bankruptcy, cancer, addiction, \emph{etc}. Gather all other uninteresting categories into a catch-all category denoted $c_{0}$. The set $\SET{C} = \{c_0, c_1, \ldots, c_N\}$ is \emph{complete} in the sense that all user topic interests can be represented as subsets of $\SET{C}$ with the usual set operations. We are interested in threats from search engine learning that compromise the ability of \U{} to plausibly deny their interest in sensitive topics. We will assess threats to plausible deniability by testing if content appearing on search result pages can be attributed to interest in a specific topic $c_i \in \SET{C}$, versus interest in any other topic, on the balance of probabilities.  
\begin{figure}
    \centering
        \includegraphics[scale=0.45]{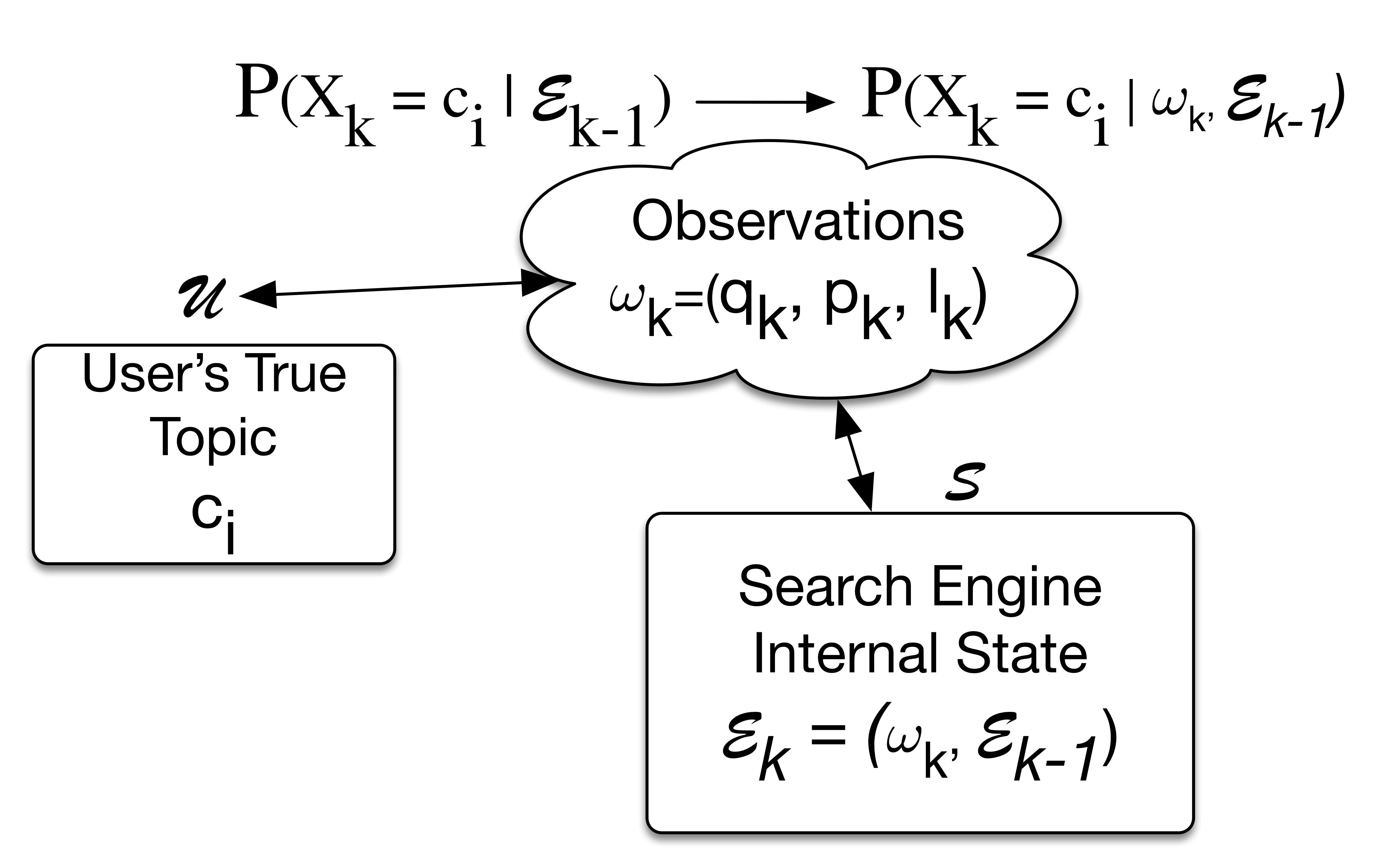}    
   	\caption{Overview of user--search engine interaction\label{fig:overview}}
 \end{figure}\\[-4mm] %

We treat \S{} as a black-box with internal state unknown to \U{}. As a starting point, our initial assumption is that \S{} is motivated to use its  internal state of knowledge of \U{} when producing personalised outputs for \U{}, thereby revealing something about its internal state.
 
\begin{assumption}[Revealing Observations]
	 A search engine selects personalised page content, such as adverts, it believes are aligned with our interests.
	 \label{assumption:1}
\end{assumption}

When a search engine infers that a particular advertising category is likely to be of interest to a user, and it is more likely to generate click through and sales, it is obliged to use this information when selecting which adverts to display. This suggests that, by examining \emph{advert content} recommended by the search engine, it is possible to detect evidence of sensitive topic profiling by the search engine. Assumption~\ref{assumption:1} is fundamental to the application of our approach in that, if \S{} does not produce content that reveals evidence of learning then, since \S{} is a black-box in our model, our approach has nothing to say about observer learning. In summary, we rely on the observer to show his hand through adverts -- our approach can only observe what is shown. In practice this does not appear to be a significant limitation with regard to many topics regarded as sensitive to users. In our experiments we observe an average of $2-3$ adverts per probe query with less than $10\%$ of probe queries resulting in no advert content. We also note that our scope is limited to examining advert content.  We note that in addition to adverts commercial search engines also typically provide additioenal personalised content that could also be tested for evidence of learning, for example Google provides a variety of personalised content such as ``top stories'', related Tweets.  However we leave consideration of these as future work.

A user interacts with a search engine by issuing a query, receiving a web page in response and then clicking on one or more items in the response. In the case of web-search, a single such interaction, labeled with index $j$, consists of a \emph{query, response page, item-click} triple, denoted 
	$\omega_j = \left(q_j, p_j, l_j \right)$. 
	
We model construction of a query $q_j$ as selection of words from a generally available dictionary denoted $\SET{D}$. We assume that words in $\SET{D}$ are matched to topics in $\SET{C}$. The word--topic category matching is not unique and words may be matched to multiple topic categories. 
%
A user session of length $k>0$ steps consists of a {sequence} of $k$ individual steps, and is denoted  $\left\{\omega_k\right\}_{k\ge1}$.  The sequence of interactions $\left\{\omega_k\right\}_{k\ge1}$ is jointly observed by  the user and the search engine -- and perhaps several other third-party observers. 
The relationship between prior and posterior background knowledge at each step $k$ is
\begin{align}
\{\omega_k, \EV{E}_{k-1}\} &= \EV{E}_{k}, \quad k=1,2,\ldots 
\label{eqn:e:k:1} 
\end{align}
\noindent where  $\EV{E}_0$ denotes the initial background knowledge state of \S{} at the beginning of a session immediately before $\omega_1$ is observed. The detail of $\EV{E}_0$ is unknown to \U{} who treats \S{} as a black-box. 
Figure~\ref{fig:overview} illustrates the interaction between user and search engine in our model. 

Let the random variable $\VEC{X}$ with sample space  $\{0,1\}^{N}$ represent user interest in categories in $\SET{C}$ during a session. A value of $1$ in element $i$ of $\VEC{X}$ indicates evidence is detected of user interest in topic $c_i$.   

After each step $k$ of a query session, \S{} can construct a posterior probability distribution for $\VEC{X}$, namely, for $\VEC{x} \in \{0,1\}^{N}$
\begin{align}
\P(\VEC{X} = \VEC{x} \, \vert  \, \Omega_k = \omega_k, \EV{E}_k)
= \P(\VEC{X} = \VEC{x} \, \vert  \, \EV{E}_{k+1})
\label{eqn:actions:001}
\end{align}
\noindent We use $\Omega_k$ to represent the observation at step $k$, so that $\Omega_k = \omega_k$ indicates $\omega_k$ is observed at step $k$.

The individual interest vector for topic $i$ is denoted $\VEC{c}_i$, a vector with a single $1$ in the $i^{th}$ position and $0$ in all other positions. The probability of that a user is interested \emph{only} in topic $i$ at step $k$ of a session and the posterior probability of detecting evidence that a user is interested only in topic $i$ at query step $k$, conditioned on observing $\omega_k$ and background knowledge $\EV{E}_k$, are 
\begin{align*}
	\P(\VEC{X} = \VEC{c}_i \vert \, \EV{E}_k) 
	&\quad\text{and}\quad
	\P(\VEC{X} = \VEC{c}_i \vert \, \Omega_k = \omega_k, \, \EV{E}_{k} )
\end{align*}
\noindent respectively.
Since $\mathcal{C}$ contains all possible topics: 
\begin{align*}
	\sum_{i=0}^{N} \P(\VEC{X} = \VEC{c}_i \vert \, \EV{E}_k) = 1, \quad k=1,2,\ldots
\end{align*}%
\subsection{Example: Single Sensitive Category}
\label{sec:example:bso}
\POL{To illustrate mathematical results as we go, we use a simple ideal model, consisting of a single sensitive category, as an illustrative example. We will refer to it as the Single Sensitive Category (SSC) model. The single sensitive topic is denoted $c^{1}$ and the catch-all, non-sensitive topic representing every other topic that is not part of the sensitive topic is denoted $c^{o}$. }

\POL{Suppose \U{} can issue queries related to either of two topics $\{c^{1}, c^{o} \}$ denoting sensitive and non-sensitive interests respectively. \S{} models the process by which \U{} draws queries according to an initial probability model
\begin{align}
\P(\VEC{X} = c^{1} \vert \EV{E}_0) := p_{0}^{s}, \quad
\P(\VEC{X} = c^{o} \vert \EV{E}_0) := p_{0}^{n}
\label{eqn:transition:bso:0}
\end{align} 
\noindent with  
$p^{1} + p^{0} = 1$.
}
\POL{On observing a query from \U{} at step $k$ the observer \S{} outputs one of $\{\omega^{s}_{k}, \omega^{n}_{k}\}$ according to the associated conditional probabilities at step $k$ given by
\small
\begin{align}
\P(\Omega_{k} = \omega_{k}^{s}\vert \VEC{X} = c^{1}, \EV{E}_{k}) = \P(\Omega_{k} = \omega_{k}^{n}\vert \VEC{X} = c^{o}, \EV{E}_{k}) &:= \pi^{k} \notag \\
\P(\Omega_{k} = \omega_{k}^{n}\vert \VEC{X} = c^{1}, \EV{E}_{k}) = \P(\Omega_{k} = \omega_{k}^{s}\vert \VEC{X} = c^{o}, \EV{E}_{k}) &:= 1-\pi^{k} 
\label{eqn:transition:bso:1}
\end{align}
\normalsize}%
\POL{The SSC model is deliberately simple as the intention is to illustrate mathematical concepts. The model is generally useful for exploring black-box interactions and can be readily extended to include more sophisticated scenarios such as allowing \U{} to select from multiple topics, as would happen when \U{} attempts to obfuscate her interests by switching topics.}%
\subsection{Plausible Deniability}
\label{sec:by:design}
Our threat assessment model is based on \emph{plausible deniability}.   Informally, the user activity observed by the search engine exhibits plausible deniability when, with high probability, is consistent with the user being interested in any one of several topics at least one of which is not sensitive for the user.  That is, the user activity supports reasonable doubt about the user's actual interest in a given sensitive topic. 

In our setup, the topics are $c_i\in\SET{C}$ while the observed activity is $\{\omega_j\}_{j=1}^{k}$ at step $k$ in a session (i.e. the queries, search result pages and associated user clicks).   We formalise plausible deniability as follows.
%
%
\begin{definition}[\DENY{}\label{def:plausible}]
\label{def:plausible:detect:001}
For privacy parameters $\epsilon >0$ and $m>0$ and a set of $N+1$ topics $\SET{C} = \{c_0, \ldots c_N \}$, a user with a true user interest vector $\VEC{x} \in \{0,1\}^{N+1}$ is said to have \emph{\DENY{}} at step $k$ in the query session, if, for observations $\{\omega_{j}\}_{j=1}^{k}$, made at each step $j=1,\ldots k$ of a session by an observer possessing initial background knowledge $\EV{E}_0$ at the beginning of the session, there exist at least $m-1$ other $\VEC{x}_i \in \{0,1\}^{N+1}\setminus\{\VEC{x}\} $ such that
\begin{align}
	e^{-\epsilon} < \PD(\VEC{x}, \VEC{x}_i, \{\omega_{j}\}_{j=1}^{k}) < e^{\epsilon}
	\label{eqn:delta:deny}
\end{align}
\noindent where
\begin{align}
\PD(\VEC{x}, \VEC{x}_i, \{\omega_{j}\}_{j=1}^{k}) = \frac{\P(\Omega_{k}=\omega_{k}, \ldots \Omega_{1}=\omega_{1} \, \vert  \,  \VEC{X} = \VEC{x}, \EV{E}_0 )}
{\P(\Omega_{k}=\omega_{k}, \ldots \Omega_{1}=\omega_{1} \, \vert  \,  \VEC{X} = \VEC{x}_i, \EV{E}_0 )}
\label{eqn:delta:deny:1}
\end{align}
\label{def:delta:deny}
\end{definition}
\noindent For \eqref{eqn:delta:deny} to be well-defined, all probabilities are assumed to be non-zero. In practice, this is not a significant restriction since categories with zero probability are gathered into the catch-all topic $c_0$.
By applying the chain-rule for conditional probability, \eqref{eqn:delta:deny:1} can be rewritten as 
{\small
\begin{align}
&\PD(\VEC{x}, \VEC{x}_i, \{\omega_{j}\}_{j=1}^{k}) = \frac{\P(\Omega_{k}=\omega_{k}, \ldots \Omega_{1}=\omega_{1} \, \vert  \,  \VEC{X} = \VEC{x}, \EV{E}_0 )}
{\P(\Omega_{k}=\omega_{k}, \ldots \Omega_{1}=\omega_{1} \, \vert  \,  \VEC{X} = \VEC{x}_i, \EV{E}_0 )} \notag\\
&= \prod_{j=0}^{k-1} \frac{\P(\Omega_{k-j}=\omega_{k-j} \vert \VEC{X} = \VEC{x}, \Omega_{k-j-1}=\omega_{k-j-1}, \ldots \Omega_{1}=\omega_{1}, \EV{E}_0 )}
{\P(\Omega_{k-j}=\omega_{k-j} \vert \VEC{X} = \VEC{x}_i, \Omega_{k-j-1}=\omega_{k-j-1}, \ldots \Omega_{1}=\omega_{1}, \EV{E}_0 )}
\label{eqn:delta:deny:2} \\
&= \prod_{j=0}^{k-1} \frac{\P(\Omega_{k-j}=\omega_{k-j}\, \vert  \,  \VEC{X} = \VEC{x}, \EV{E}_{k-j} )}
{\P(\Omega_{k-j}=\omega_{k-j} \, \vert  \,  \VEC{X} = \VEC{x}_i, \EV{E}_{k-j} )}
= \prod_{j=0}^{k-1} \WeePD[k-j](\VEC{x}, \VEC{x}_i, \omega_{k-j})
\label{eqn:delta:deny:3}
\end{align}
}%
\noindent where
\begin{align}
\WeePD(\VEC{x}, \VEC{x}_i, \omega_j) := \frac{\P(\Omega_{j}=\omega_{j}\, \vert  \,  \VEC{X} = \VEC{x}, \EV{E}_j )}
{\P(\Omega_{k}=\omega_{j} \, \vert  \,  \VEC{X} = \VEC{x}_i, \EV{E}_j )}
\label{eqn:delta:deny:4}
\end{align}

\noindent is the incremental change in \DENY{} arising from the single observation $\omega_k$ at step $k$ of the session.

\POL{In the case of the SSC model, there are two topics -- sensitive and non-sensitive -- so that \U{} can at best hope for $(\epsilon, m=2)$--Plausible Deniability for the sensitive topic, in which case \eqref{eqn:delta:deny:4} can be written as
\begin{align}
\WeePD(c^{1}, c^{o}, \omega^{s}_j) &= \frac{\P(\Omega_{j}=\omega^{s}_{j}\, \vert  \,  \VEC{X} = c^{1}, \EV{E}_j )}
{\P(\Omega_{k}=\omega^{s}_{j} \, \vert  \,  \VEC{X} = c^{o}, \EV{E}_j )}
&=\frac{\pi^{j}}{1-\pi^{j}}
\label{eqn:deny:bso:1}
\end{align}
\noindent when \S{} emits a sensitive output $\omega^{s}_{j}$ at step $j$, and 
\begin{align}
\WeePD(c^{1}, c^{o}, \omega^{s}_j)
&=\frac{1-\pi^{j}}{\pi^{j}}
\label{eqn:deny:bso:3}
\end{align}
\noindent when a non-sensitive output is emitted at step $j$. Substituting these values into \eqref{eqn:delta:deny:3} and assuming $a$ sensitive outputs and $k-a$ non-sensitive outputs in a session of length $k$, let $k_s$ denote the sub-sequence of steps where sensitive outputs are detected and $k_n = [k] \setminus k_s$ be the steps where non-sensitive outputs are detected so that
\begin{align}
\PD(c^{1}, c^{o}, \{\omega_{j}\}_{j=1}^{k})
= \prod_{i\in k_s}\frac{\pi^{i}}{1-\pi^{i}}\prod_{j\in k_n}\frac{1-\pi^{j}}{\pi^{j}}
\end{align}
 }
\subsection{Comparison with Other Anonymity Measures}
\label{sec:compare:k:dp}
\POL{Intuitively, Definition~\ref{def:plausible:detect:001} is similar to $k$--anonymity in that an observer can only explain observations $\{\omega_{j}\}_{j=1}^{k}$ to within a generalised set consisting of at least $k=m$ topic vectors with probability bounded by the choice of $\epsilon$.  Definition~\ref{def:plausible:detect:001} differs from regular k--anonymity in  requiring both upper and lower bounds in \eqref{eqn:delta:deny:1} since evidence of \emph{loss of interest} in a sensitive topic may be as revealing and potentially embarrassing as evidence of \emph{increase of interest}.}

\POL{Definition~\ref{def:plausible:detect:001} can also be compared with a slightly weaker form of Differential Privacy. Informally, making an observation  should not make \S{} significantly more, or less, confident of user interest in a particular sensitive topic.}

\POL{From \eqref{eqn:delta:deny:4} the incremental change due to a single observation $\Omega_j = \omega_j$ is }
\begin{align}
 \frac{\P(\Omega_{j}=\omega_{j}\, \vert  \,  \VEC{X} = \VEC{x}, \EV{E}_j )}
{\P(\Omega_{k}=\omega_{j} \, \vert  \,  \VEC{X} = \VEC{x}_i, \EV{E}_j )}
= \frac{\P( \VEC{X} = \VEC{x}\, \vert  \, \Omega_{j}=\omega_{j}, \EV{E}_j )}
{\P(\VEC{X} = \VEC{x}_i \, \vert  \,  \Omega_{k}=\omega_{j}, \EV{E}_j )}
\label{eqn:compare:1}
\end{align}
\noindent\POL{by applying Bayes Theorem. Since \eqref{eqn:delta:deny:4} is bounded above and below for at least $m-1$ other $x_i$ when Definition~\ref{def:plausible:detect:001} holds, it follows that  }
\begin{align}
e^{-\epsilon} < \frac{\P( \VEC{X} = \VEC{x}\, \vert  \, \Omega_{j}=\omega_{j}, \EV{E}_j )}
{\P(\VEC{X} = \VEC{x}_i \, \vert  \,  \Omega_{k}=\omega_{j}, \EV{E}_j )} < e^{\epsilon}
\label{eqn:compare:2}
\end{align}
\noindent\POL{for at least $m-1$ other topic vectors $x_i$ -- but not necessarily for \emph{all} topic vectors. In which case we say that m--Differential Privacy holds for $\epsilon > 0$ whenever Definition~\ref{def:plausible:detect:001} holds, meaning that for any topic vector $x$ it is impossible to distinguish it from at least $m-1$ other topic vectors in $\{0,1\}^{N+1}$. This is a slightly weaker statement of Differential Privacy from the usual global definition.}
\subsection{Testing for Plausible Deniability}
\label{sec:p:d:i}
The following \emph{indistinguishability} definition of \emph{privacy risk} measures the change in belief by a search engine due to inference from observed user events relative to its prior belief conditioned on the background data available at the start of the query session. It is adapted from work begun in \cite{mac2015don} and using it allows us to adapt tools originally developed there.
\begin{definition}[\EPS]
\label{def:eps:001} 
For a privacy parameter $\epsilon > 0$, a user with interest vector $\VEC{x} \in \{0,1\}^{N}$ is said to be s said to be $\epsilon$-Indistinguishable with respect to an observation of user actions $\omega_k$ at step $k$, if

\begin{align}
e^{-\epsilon} &\leq \FPRI (\VEC{x}, \omega_k) \leq e^{\epsilon}
\label{def:eps:indist:001}
\end{align}

\noindent where

\begin{align}
\FPRI (\VEC{x}, \omega_k)  &= \frac{\P(\VEC{X} = \VEC{x} \, \vert  \,  \Omega_k = \omega_k, \EV{E}_{k} )}{\P(\VEC{X} = \VEC{x} \, \vert  \,  \EV{E}_{0})} 
\label{eqn:m:u:1} \\
&= \frac{\P(\VEC{X} = \VEC{x} \, \vert  \,  \EV{E}_{k+1} )}{\P(\VEC{X} = \VEC{x} \, \vert  \,  \EV{E}_{0})} \quad\text{(Applying \eqref{eqn:e:k:1})} 
\label{eqn:m:u:2}
\end{align}
\noindent is called the \emph{\EPS score} of the interest vector $\VEC{x}$ for observation $\omega_k$ and background knowledge $\EV{E}_{k}$ at step $k$.
\label{def:eps:indist1}
\end{definition}
In other words, for \EPS to hold at step $k$ of a query session, the conditional posterior distribution should be approximately equal to the prior distribution at the beginning of the query session for the true interests of the user. To ensure \eqref{eqn:m:u:1} is well defined we assume all probabilities in \eqref{eqn:m:u:1} are non-zero, so that $0 < \FPRI (\VEC{x}, \omega_k) < \infty $. Expression \eqref{eqn:m:u:1} implies that if \EPS holds at step $k$ for an interest vector $\VEC{x}$, then $e^{-\epsilon} < (\FPRI[k-1](\VEC{x}, \omega_{k}))^{-1} < e^{\epsilon}$.

The next result provides the necessary connection between \EPS and \DENY to apply tools, developed in \cite{mac2015don} for \EPS, to \DENY. 

\begin{proposition}\label{prop:4eps}
	If \EPS holds on a subset $\SET{I} \subseteq \{0,1\}^{N+1}$ for $\epsilon>0$ for step $k$ and the initial step $1$, then \4DENY holds on $\SET{I}$ for $m \leq \vert \SET{B} \vert$. Furthermore
\begin{align}
	&\PD(\VEC{x}, \VEC{x}_i, \{\omega_{j}\}_{j=1}^{k}) 
	= 	\frac{\FPRI[k](\VEC{x}_1, \omega_{k})}{\FPRI[k](\VEC{x}_2, \omega_{k})} \frac{\FPRI[1](\VEC{x}_2, \omega_{1})}
		{\FPRI[1](\VEC{x}_1, \omega_{1})}
\label{eps:deny:proof:0}
\end{align}
\end{proposition}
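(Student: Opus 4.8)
The plan is to prove the closed form \eqref{eps:deny:proof:0} first and then read the deniability bound directly off it. The starting point is the chain-rule expansion already carried out in \eqref{eqn:delta:deny:3}, which writes $\PD(\VEC{x},\VEC{x}_i,\{\omega_{j}\}_{j=1}^{k})=\prod_{j=1}^{k}\WeePD[j](\VEC{x},\VEC{x}_i,\omega_j)$ as a product of single-step likelihood ratios. What remains is to re-express each factor $\WeePD[j]$ through the \EPS scores $\FPRI[j]$, to observe that the resulting product telescopes, and then to invoke the hypothesis at the two surviving boundary steps.

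First I would invert the likelihoods inside each $\WeePD[j]$ with Bayes' theorem, exactly as in \eqref{eqn:compare:1}, obtaining
\begin{align*}
\WeePD[j](\VEC{x},\VEC{x}_i,\omega_j)
&=\frac{\P(\VEC{X}=\VEC{x}\mid\Omega_j=\omega_j,\EV{E}_j)\,/\,\P(\VEC{X}=\VEC{x}\mid\EV{E}_j)}
       {\P(\VEC{X}=\VEC{x}_i\mid\Omega_j=\omega_j,\EV{E}_j)\,/\,\P(\VEC{X}=\VEC{x}_i\mid\EV{E}_j)},
\end{align*}
the observation marginal $\P(\Omega_j=\omega_j\mid\EV{E}_j)$ cancelling between numerator and denominator because it does not depend on $\VEC{X}$; the non-degeneracy assumption recorded after Definition~\ref{def:plausible:detect:001} makes all of these divisions legitimate. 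Next, dividing the numerator and denominator priors through by $\P(\VEC{X}=\cdot\mid\EV{E}_0)$ and using \eqref{eqn:e:k:1}--\eqref{eqn:actions:001} to identify the posterior $\P(\VEC{X}=\cdot\mid\Omega_j=\omega_j,\EV{E}_j)$ with the state reached after step $j$ and $\P(\VEC{X}=\cdot\mid\EV{E}_j)$ with the state reached after step $j-1$, the defining relation \eqref{eqn:m:u:1} of the $\FPRI$ score turns each factor into
\begin{align*}
\WeePD[j](\VEC{x},\VEC{x}_i,\omega_j)
&=\frac{\FPRI[j](\VEC{x},\omega_j)}{\FPRI[j-1](\VEC{x},\omega_{j-1})}\cdot
  \frac{\FPRI[j-1](\VEC{x}_i,\omega_{j-1})}{\FPRI[j](\VEC{x}_i,\omega_j)},
\end{align*}
i.e.\ the ratio of the quantity $\FPRI[j](\VEC{x},\omega_j)/\FPRI[j](\VEC{x}_i,\omega_j)$ evaluated at two consecutive steps.

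Forming the product over $j=1,\dots,k$ as in \eqref{eqn:delta:deny:3} then telescopes: every intermediate $\FPRI$ term cancels and only the contributions at the final step $k$ and at the initial step $1$ remain, which is precisely \eqref{eps:deny:proof:0} (on writing $\VEC{x}=\VEC{x}_1$ and $\VEC{x}_i=\VEC{x}_2$). For the deniability statement, fix $\VEC{x}\in\SET{I}$ and let $\VEC{x}_i$ be any other vector in $\SET{I}$. Because \EPS holds on $\SET{I}$ at step $k$ and at the initial step, each of the four scores $\FPRI[k](\VEC{x},\omega_k)$, $\FPRI[k](\VEC{x}_i,\omega_k)$, $\FPRI[1](\VEC{x},\omega_1)$, $\FPRI[1](\VEC{x}_i,\omega_1)$ lies in $[e^{-\epsilon},e^{\epsilon}]$ by \eqref{def:eps:indist:001}; substituting these four bounds into \eqref{eps:deny:proof:0} yields $e^{-4\epsilon}<\PD(\VEC{x},\VEC{x}_i,\{\omega_{j}\}_{j=1}^{k})<e^{4\epsilon}$, which is \eqref{eqn:delta:deny} with $\epsilon$ replaced by $4\epsilon$. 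Since this holds for \emph{every} $\VEC{x}_i\in\SET{I}\setminus\{\VEC{x}\}$, there are at least $\vert\SET{B}\vert-1$ admissible witnesses, so \4DENY holds on $\SET{I}$ for all $m\le\vert\SET{B}\vert$.

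The Bayes inversion and the concluding substitution are routine; the step that needs care is the middle one --- keeping the background-knowledge indices $\EV{E}_0,\EV{E}_1,\dots$ straight so that the $\VEC{X}$-independent factors really do cancel and the telescoping collapses to precisely the two endpoint $\FPRI$ terms rather than leaving a residual factor, and in particular handling the $j=1$ boundary of the product, where the ``state after step $0$'' is just the prior $\EV{E}_0$ so no genuine $\FPRI[0]$ term is produced. I expect that index bookkeeping to be the main obstacle.
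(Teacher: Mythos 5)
Your proposal follows essentially the same route as the paper's own proof: Bayes inversion of each single-step likelihood ratio, identification of the resulting factors with the $\FPRI[j]$ scores at consecutive steps (the paper's \eqref{eps:deny:proof:3}), telescoping to the endpoint terms at steps $k$ and $1$, and bounding the four surviving scores by $e^{\pm\epsilon}$ to conclude the $4\epsilon$ deniability bound for every $\VEC{x}_i\in\SET{I}$. You are in fact slightly more explicit than the paper about the final $4\epsilon$ arithmetic and the witness count, and the $j=1$ boundary bookkeeping you flag is exactly the step the paper itself glosses over in passing from its single-step identity to the telescoped formula \eqref{eps:deny:proof:0}.
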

\begin{proof}
Assume \EPS holds on $\SET{I} \subseteq \{0,1\}^{N+1}$ then for any $\VEC{x}_1, \VEC{x}_2 \in \SET{I}$. From \eqref{eqn:delta:deny:4} 
\small 
\begin{align}
	&\WeePD(\VEC{x}_1, \VEC{x}_2, \omega_k) = \frac{\P(\Omega_k = \omega_k \vert \VEC{X} = \VEC{x}_1, \EV{E}_k)}{\P(\Omega_k = \omega_k \vert \VEC{X} = \VEC{x}_2, \EV{E}_k)} \notag \\
	&=\frac{\P(\VEC{X} = \VEC{x}_1  \vert \Omega_k = \omega_k, \EV{E}_k)}{\P(\VEC{X} = \VEC{x}_1  \vert \EV{E}_k)}
	\frac{\P(\VEC{X} = \VEC{x}_2  \vert \EV{E}_k)}{\P(\VEC{X} = \VEC{x}_2  \vert \Omega_k = \omega_k, \EV{E}_k)} \label{eps:deny:proof:1} \\
	&\text{(Bayes Theorem)} \notag
\end{align}
\begin{align}
	&=\underbrace{\frac{\P(\VEC{X} = \VEC{x}_1  \vert \Omega_k = \omega_k, \EV{E}_k)}{\P(\VEC{X}=\VEC{x}_1 \vert \EV{E}_0)}}_{(a)}
	\underbrace{\frac{\P(\VEC{X}=\VEC{x}_1 \vert \EV{E}_0)}{\P(\VEC{X} = \VEC{x}_1  \vert \EV{E}_k)}}_{(b)} \times \ldots \notag \\
	&\ldots\times \underbrace{\frac{\P(\VEC{X} = \VEC{x}_2  \vert \EV{E}_k)}{\P(\VEC{X}=\VEC{x}_2 \vert \EV{E}_0)}}_{(c)}
	\underbrace{\frac{\P(\VEC{X}=\VEC{x}_2 \vert \EV{E}_0)}{\P(\VEC{X} = \VEC{x}_2  \vert \Omega_k = \omega_k, \EV{E}_k)}}_{(d)} \label{eps:deny:proof:2}
\end{align}
\begin{align}
	= \frac{\FPRI[k](\VEC{x}_1, \omega_{k})\FPRI[k-1](\VEC{x}_2, \omega_{k-1})}
		{\FPRI[k-1](\VEC{x}_1, \omega_{k-1})\FPRI[k](\VEC{x}_2, \omega_{k})}
		\label{eps:deny:proof:3}
\end{align}%
\normalsize 
\noindent Where expressions (b) and (c) in \eqref{eps:deny:proof:2} are $(\FPRI[k-1]({\VEC{x}_1, \omega_{k-1}}))^{-1}$ and $(\FPRI[k-1](\VEC{x}_2, \omega_{k-1}))^{-1}$ respectively from the definition in \eqref{eqn:m:u:2}. 

Therefore from the definition of $\PD(\VEC{x}, \VEC{x}_i, \{\omega_{j}\}_{j=1}^{k})$ in \eqref{eqn:delta:deny:3}
\begin{align}
	&\PD(\VEC{x}, \VEC{x}_i, \{\omega_{j}\}_{j=1}^{k}) 
	= 	\frac{\FPRI[k](\VEC{x}_1, \omega_{k})}{\FPRI[k](\VEC{x}_2, \omega_{k})} \frac{\FPRI[1](\VEC{x}_2, \omega_{1})}
		{\FPRI[1](\VEC{x}_1, \omega_{1})}
\label{eps:deny:proof:4}
\end{align}
\noindent So that \eqref{eps:deny:proof:0} holds. Since individual elements in \eqref{eps:deny:proof:4} satisfy \EPS for $\epsilon>0$ it follows that \4DENY{} holds as required. 
\QED
\end{proof}%

Proposition~\ref{prop:4eps} provides a basic strategy for asserting when \DENY{} holds. By establishing a value of $\epsilon$ for which a collection of topics $\SET{C} = \{c_0, \ldots c_N\}$ satisfies \EPS, \4DENY follows with, at least, $m = \vert\SET{C}\vert = N+1$. This is a \emph{minimum} guarantee, as there may be topics for which \EPS fails but \4DENY holds. 
%
%
 In our experiments we test whether \U{} can plausibly deny whether or not observed actions can be uniquely associated with interest in a given sensitive topic $c_i$ versus interest in ``any other'' topic in $\SET{C} \setminus \{c_i\}$, so that $m=2$. 
 
For a topic $c_i\in\SET{C}$ the expression for \DENY{} becomes

\small
 \begin{align}
&\PD(\VEC{c}_i, \VEC{c}_{-i} , \{\omega_j\}_{j=1}^k) := 
\frac{\P(\Omega_k = \omega_k, \ldots  \Omega_1 = \omega_1 \vert \VEC{X} = \VEC{c}_i, \EV{E}_0)}{\P(\Omega_k = \omega_k, \ldots  \Omega_1 = \omega_1 \vert \VEC{X} = \VEC{c}_{-i}, \EV{E}_0)}
\label{eqn:discuss:estim:2}
\end{align}
\normalsize
\noindent where $\VEC{c}_{-i}$ denotes the topic interest vector representing interest in the topics $\SET{C}\setminus \{c_i\}$. 

The following result connects $\PD(\VEC{c}_i, \VEC{c}_{-i}, \{\omega_j\}_{j=1}^k)$ to variation in probabilities
\begin{proposition}
\label{prop:estim}
If \DENY{} holds for $\VEC{c}_{i}$ and $\VEC{c}_{-i}$ with $\epsilon>0$ and $m=2$ then
\begin{align}
\vert 	&\P(\Omega_k = \omega_{k}, \ldots \Omega_1=\omega_1 \, \vert  \,  \VEC{X} = \VEC{c}_i, \EV{E}_0 ) \notag\\
&- \P(\Omega_k = \omega_{k}, \ldots \Omega_1=\omega_1  \, \vert  \,  \VEC{X} = \VEC{c}_{-i}, \EV{E}_0 ) \vert \notag \\
	&\geq\left\vert\log\left( \frac{\FPRI[k](\VEC{x}_1, \omega_{k})}{\FPRI[k](\VEC{x}_2, \omega_{k})} \frac{\FPRI[1](\VEC{x}_2, \Omega_{k-1})}
		{\FPRI[1](\VEC{x}_1, \Omega_{k-1})} \right)\right\vert
\label{eqn:prop:estim:2}
\end{align}
And so 
\begin{align}
	\epsilon_{*} := \left\vert\log\left( \frac{\FPRI[k](\VEC{x}_1, \omega_{k})}{\FPRI[k](\VEC{x}_2, \omega_{k})} \frac{\FPRI[1](\VEC{x}_2, \omega_{k-1})}
		{\FPRI[1](\VEC{x}_1, \omega_{k-1})}\right)\right\vert
		\label{eqn:prop:estim:2b}
\end{align}
\noindent is a lower bound for the best possible achievable level of \DENY{}.

\end{proposition}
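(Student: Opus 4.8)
The plan is to read off the right-hand side of \eqref{eqn:prop:estim:2} as $\abs{\log \PD(\VEC{c}_i, \VEC{c}_{-i}, \{\omega_j\}_{j=1}^k)}$ and then reduce the proposition to a single scalar comparison. Abbreviate the two joint likelihoods by $P_i = \P(\Omega_k=\omega_k,\ldots,\Omega_1=\omega_1 \vert \VEC{X}=\VEC{c}_i,\EV{E}_0)$ and $P_{-i}$ correspondingly. The first step is immediate from Proposition~\ref{prop:4eps}: specialising \eqref{eps:deny:proof:0} to $\VEC{x}_1=\VEC{c}_i$, $\VEC{x}_2=\VEC{c}_{-i}$ identifies the bracketed product of $\FPRI$-scores appearing in \eqref{eqn:prop:estim:2} with $\PD = P_i/P_{-i}$, so the entire right-hand side equals $\abs{\log(P_i/P_{-i})} = \abs{\log P_i - \log P_{-i}}$. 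The proposition therefore collapses to the scalar statement $\abs{P_i - P_{-i}} \geq \abs{\log P_i - \log P_{-i}}$ together with the final assertion about $\epsilon_*$.

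To attack the scalar inequality I would use the mean-value identity $\log P_i - \log P_{-i} = (P_i - P_{-i})/\xi$ for some $\xi$ strictly between $P_i$ and $P_{-i}$, equivalently the integral form $\log P_i - \log P_{-i} = \int_{P_{-i}}^{P_i} t^{-1}\,dt$ set against $P_i - P_{-i} = \int_{P_{-i}}^{P_i} dt$. The whole comparison then reduces to controlling the weight $1/\xi$ (respectively $t^{-1}$): the claimed $\geq$ is equivalent to $1/\xi \leq 1$, i.e. to the mean-value point lying in the regime $\xi \geq 1$. The final claim about $\epsilon_*$ is then essentially definitional and I would argue it directly: for $m=2$ the deniability condition \eqref{eqn:delta:deny} reads $e^{-\epsilon} < \PD < e^{\epsilon}$, which for the observed sequence is exactly $\abs{\log \PD} < \epsilon$, i.e. $\epsilon > \epsilon_*$ with $\epsilon_*$ as in \eqref{eqn:prop:estim:2b}. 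Hence no $\epsilon < \epsilon_*$ can satisfy \DENY{}, so $\epsilon_*$ is the infimum of admissible privacy parameters and thus a lower bound on the strongest (smallest-$\epsilon$) deniability the session can support.

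The main obstacle is precisely the orientation of the scalar inequality, and it is genuinely delicate. Because $P_i$ and $P_{-i}$ are joint likelihoods of a length-$k$ observation sequence they lie in $(0,1]$, so on the integration interval $t^{-1} \geq 1$ and the mean-value point obeys $\xi \leq 1$; a naive application of the argument above therefore yields $\abs{P_i - P_{-i}} \leq \abs{\log \PD}$, the \emph{reverse} of \eqref{eqn:prop:estim:2}. To recover the stated $\geq$ I would not difference the raw likelihoods but rather the per-step quantities already isolated in the factorisation \eqref{eps:deny:proof:0}: the $\FPRI$-scores of Proposition~\ref{prop:4eps} are posterior-to-prior ratios that exceed $1$ exactly when an observation raises the observer's confidence, so the relevant mean-value point can be driven into the regime $\xi \geq 1$, where $1/\xi \leq 1$ and the inequality flips to the required direction. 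Pinning down the confidence regime, or the normalisation of the likelihoods, under which the mean-value point satisfies $\xi \geq 1$ is the crux of the argument and the step I expect to consume most of the work; it is also where an explicit hypothesis on the sign of the per-step confidence change, beyond the bare statement of Definition~\ref{def:plausible:detect:001}, may need to be invoked.
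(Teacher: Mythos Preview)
Your treatment of the ``And so'' clause is exactly the paper's argument: \DENY{} with parameter $\epsilon$ gives $e^{-\epsilon}<\PD<e^{\epsilon}$, i.e. $\abs{\log\PD}<\epsilon$, so $\epsilon_*:=\abs{\log\PD}$ is a lower bound for every admissible $\epsilon$; then one substitutes the $\FPRI$-product identity from Proposition~\ref{prop:4eps} for $\PD$. That is literally all the paper does --- its proof is three lines and addresses only the $\epsilon_*$ assertion.

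Where you diverge is in taking the displayed inequality \eqref{eqn:prop:estim:2} seriously and trying to establish $\abs{P_i-P_{-i}}\geq\abs{\log P_i-\log P_{-i}}$. You are right to flag the orientation problem: for $P_i,P_{-i}\in(0,1]$ the mean-value argument gives $\abs{\log P_i-\log P_{-i}}\geq\abs{P_i-P_{-i}}$, the reverse of what is written. The paper does not attempt to prove this line, and its auxiliary Lemma~\ref{lemma:1} in the appendix in fact points the other way (from $e^{-\epsilon}<x/y<e^{\epsilon}$ and $0<x,y<1$ it deduces $\abs{x-y}<\epsilon$, not $\geq$). So your proposed rescue via per-step $\FPRI$-ratios exceeding $1$ is chasing a phantom: the displayed $\geq$ is not what is actually established, and the substantive content of the proposition --- the part used downstream to build the \PDE{} estimator --- is solely the identification of $\epsilon_*$ as the infimum of admissible $\epsilon$, which you already have cleanly.
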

\begin{proof}
If \DENY{} holds for $\epsilon>0$ then
%
\begin{align}
\left\vert\log( \PD(\VEC{c}_i, \VEC{c}_{-i}, \{\omega_j\}_{j=1}^{k}) )\right\vert < \epsilon
\label{eqn:prop:estim:3}
\end{align}
\noindent so that $\vert\log(\PD(\VEC{c}_i, \VEC{c}_{-i}, \{\omega_j\}_{j=1}^{k}))\vert$ is a lower bound for all $\epsilon >0$ for which \DENY{} holds.
The result follows by substituting the expression in \eqref{eps:deny:proof:3} for $\PD(\VEC{c}_i, \VEC{c}_{-i}, \{\omega_j\}_{j=1}^{k})$ in \eqref{eqn:prop:estim:3}.
\QED  	
\end{proof}%

Proposition~\ref{prop:estim} will be used later to create an estimator for $\epsilon_{*}$ that can be measured in experiments. From now on we simplify our discussion to the case $m=2$ and so experimental results are reported accordingly for the two-topic case, $\{c_i, c_{-i}\}$. 

\section{Implementation}
\label{sec:experiment:setup}

\subsection{Preliminaries}
\label{sec:bayesian:formulation}
During testing we wish to use \eqref{eqn:prop:estim:2b} to create an estimator, we call \PDE{}, to estimate the level of \DENY afforded. Since estimating the quantities in \eqref{eqn:prop:estim:2b} uses \PRI, we recap the bare essentials of \PRI{} and refer the reader to \cite{mac2015don} for more details. 

To test for learning PRI injects a predefined \emph{probe query} into a stream of ``true'' queries during a query session. In this way, any differences detected in advert content in response to probe queries can be compared to identify evidence of learning. An ideal probe should not disrupt the learning process of \S{}. Denote the event a probe query is selected from $\SET{D}$ at step $k+1$ by $\Omega_{k+1} = \omega_{k+1}^{P}$. We formalise the notion of an ideal probe query by demanding that observing $\Omega_{k+1} = \omega_{k+1}^{P}$ should be conditionally independent of the user topic $\VEC{X}$ given the existing background knowledge of the observer

\begin{align}
&\P(\VEC{X} = \VEC{c}_i, \Omega_{k+1} = \omega_{k+1}^{P} \,\vert\,  \EV{E}_k) \notag \\
&= 	
\P(\VEC{X} = \VEC{c}_i, \,\vert\, \EV{E}_k)\cdot \P(\Omega_{k+1} = \omega_{k+1}^{P} \,\vert\,  \EV{E}_k)
\label{eqn:choose:probe:10}
\end{align}
\noindent and so observing the probe query and associated clicks does not provide any more information to the observer about the interests of \U{} than the current background knowledge already provides. From \eqref{eqn:choose:probe:10} 
\begin{align}
&\P(\VEC{X} = \VEC{c}_i \,\vert\, \Omega_{k+1} = \omega_{k+1}^{P}, \EV{E}_k) \notag \\
&= 	
\P(\VEC{X} = \VEC{c}_i, \,\vert\,  \EV{E}_k)
\label{eqn:choose:probe:2}
\end{align}

In practice, choosing an ideal probe query is achieved by selecting words from $\SET{D}$ that match words for several topics in $\SET{C}$ so that it is not possible to associate a single topic in $\SET{C}$ with the probe query.

Construction of \PRI{} is based on several assumptions, the first of these assumptions is that the background knowledge at the first step of a query session, $\EV{E}_1$, provides sufficient description of background knowledge for all subsequent steps of that query session, $\EV{E}_k$.

\begin{assumption}[Sufficiently Informative Responses]\label{a:1}
Let $\SET{K}\subset\{1,2,\cdots\}$ label the sub-sequence of steps at which a probe query is issued.   
At each step $k\in\SET{K}$ at which a probe query is issued,
\begin{align}
\frac{\P(\VEC{X} = \VEC{c}_i | \Omega_k = \omega_k, \EV{E}_{k} )}{\P(\VEC{X} = \VEC{c}_i |  \EV{E}_{1} )}&=\frac{\P(\VEC{X} = \VEC{c}_i | \Omega_k = \omega_k,\EV{E}_{1} )}{\P(\VEC{X} = \VEC{c}_i|  \EV{E}_{1})}
\label{eqn:assumption:informative}
\end{align}
\noindent for each topic ${c}_i \in \SET{C}$.
\end{assumption}

\noindent So that it is not necessary to explicitly use knowledge of the search history during the current session when estimating $\FPRI$ for a topic $c$ as this is already reflected in the search engine response,  $\omega_k$, with the initial background knowledge $\EV{E}_1$ capturing background knowledge up to the start of the session, at step $k$.  Assumption \ref{a:1} greatly simplifies estimation as it means we do not have to take account of the full search history, but requires that the response to a query reveals search engine learning of interest in sensitive category $c$ which has occurred. Assumption~\ref{a:1} was called the ``Informative Probe'' assumption in \cite{mac2015don}.

The next assumption is that adverts are selected to reflect search engine belief in user interests. In this way adverts are assumed to be the principal way in which search engine learning is revealed. Given this assumption, conditional dependence on $\omega_k$ can be replaced with dependence on the adverts appearing on the screen.

\begin{assumption}[Revealing Adverts]\label{a:2} In the search engine response to a query at step $k$ it is the adverts ${a}_k$ on a response page which primarily reveal learning of sensitive categories. 
\begin{align}
\P(\VEC{X} = \VEC{c}_i | \Omega_k = \omega_k,\EV{E}_{1}) = \P(\VEC{X} = \VEC{c}_i | {a}_k,\EV{E}_{1} ), \ k\in \SET{K}\label{eqn:assumption:revealing}
\end{align}
\noindent for each topic ${c}_i \in \SET{C}$.
\end{assumption}

We estimate background knowledge $\EV{E}_1$ by selecting a training data-set, denoted $\SET{T}$, consisting of (label, advert) pairs; where the label is the category in $\SET{C}$ associated with the corresponding advert. For example, when testing for evidence of a single, sensitive topic, called ``Sensitive'', $\SET{T}$ contains items labeled ``Sensitive' or ``Other'', where ``Other'' is the label for the uninteresting, catch-all topic $c_0$. In this way $\SET{T}$ approximates the prior observation evidence available at the start of the query session so that $\SET{T}$ is an estimator for $\EV{E}_1$. 

Text processing of $\SET{T}$ produces a dictionary $\SET{D}$ of \emph{keyword} features.  This processing removes common English language high-frequency words and maps each of the remaining keywords to a stemmed form by removing standard prefixes and suffixes such as ``--ing'' and ``--ed''.  The dictionary $\SET{D}$ represents an estimate of the known universe of keywords according to the background knowledge contained in the training data.

Text appearing in the adverts in a response page is preprocessed in the same way as \SET{T} to produce a sequence of keywords from $\SET{D}$ for each advert; denoted $W=\{w_1,w_2,\cdots,w_{|W|}\}$. Words not appearing in $\SET{D}$ are ignored in our experimental setup for simplicity since sessions are short. In an operational setting it is possible, for example, to update $\SET{D}$ when new keywords are encountered and refactor $\EV{E}_1$ accordingly. 

Let $n_{\SET{D}}(w | W) := \vert \{i:i \in \{1,\cdots,|W|\},w_i = w\} \vert$, denote 
 the number of times an individual keyword $w \in \SET{D}$ occurs in a sequence $W=\{w_1,w_2,\cdots,w_{|W|}\}$. The relative frequency of an individual keyword $w\in\SET{W}$ is therefore,
\begin{align}
\phi_{\SET{D}}(w | W) &= 
\frac{n_{\SET{D}}(w | W)}{\sum_{w\in\SET{D}} n_{\SET{D}}(w | W)}
\label{eqn:phi:define}
\end{align}
\noindent recalling that only keywords $w$ appearing $\SET{D}$ are admissible due to the text preprocessing in our setup.

Let $c_i \in  \SET{C}$ be a sensitive topic of interest,  
and let $\SET{T}(c_i)$ denote the subset of $\SET{T}$ where the labels corresponds to $c_i$. Let $T(\SET{C})$ denote the set of adverts labelled for any topic in $\SET{C}$. The \PRI{} estimator for $\FPRI(\VEC{x}, \omega_k)$ given adverts $a_k$ appearing on the result page for query number $k$, is\footnote{Note that in \cite{mac2015don} the expression given for $\widehat{M}_k (\VEC{c}_i, \omega_k)$ is incorrect and is corrected here.}:

\begin{align}
\widehat{M}_k (\VEC{c}_i, \omega_k) &= \sum_{w\in{\SET{D}}}\left( \frac{\phi_{\SET{D}}(w | \SET{T}(c_i))}{  \phi_{\SET{D}}(w | \SET{T}(\SET{C})) } \cdot {\phi_{\SET{D}}(w|a_k)} \right) \label{eqn:mk001}
\end{align}
where we concatenate all of the advert text on page $k$ into a single sequence of keywords and $\psi_{\SET{D}}(w|a_k)$ is the relative frequency of $w$ within this sequence.  Similarly, concatenating all of the keywords in the training set $ \SET{T}(c_i)$, respectively $ \SET{T}(\SET{C})$, into a single sequence then $\phi_{\SET{D}}(w | \SET{T}(c_i))$, respectively $\phi_{\SET{D}}(w | \SET{T}(\SET{C}))$, is the relative frequency of $w$ within that sequence.
\subsection{Tuning the \PRI{} Estimator}
\label{sec:sparsity}
The quantity $\psi_{\SET{D}}(w|a_k)$ in the expression for the \PRI{} estimator, \eqref{eqn:mk001}, is problematic when the adverts $a_k$ on page $k$ do not contain any of the topic keywords in dictionary $\SET{D}$ i.e. when $a_k=\emptyset$, indicating there is no detectable evidence of a particular topic. To be consistent with the definition of \EPS in Section \ref{def:eps:001}, should result in a \PRI{} score of one for that topic.  We therefore replace $\phi_{\SET{D}}(w|a_k)$ with 
\small
\begin{align}
\psi_{0, \SET{D}}(w | a_k) &= 
\begin{cases}
\phi_{\SET{D}}(w | {a_k})
& \mbox{if } a_k\ne \emptyset \\
1 
& \mbox{otherwise}  
\end{cases} 
\end{align}
\normalsize
Training data is based on a sample of all possible adverts for a particular topic. We may be unlucky so that during the training phase we fail to observe adverts containing infrequently occurring keywords for a particular topic. In this case the relative frequency of such a keyword will be zero and it will not contribute when estimating \PRI{} if encountered in an advert. To address this we introduce a Laplace smoothing parameter $\lambda$ as follows
\small 
\begin{align}
n_{\lambda, \SET{D}}(w | W) &= \lambda + n_{\SET{D}}(w | W) \\
\phi_{\lambda,\SET{D}}(w | W) 
&= \frac{n_{\lambda,\SET{D}}(w | W)}{\sum_{w\in\SET{D}} n_{\lambda, \SET{D}}(w | W)}
\\
\psi_{\lambda, \SET{D}}(w | a_k) &= 
\begin{cases}
\phi_{\lambda,\SET{D}}(w | {a})
& \mbox{if } a_k\ne \emptyset \\
1 
& \mbox{otherwise}  
\end{cases} 
\label{eqn:phi:define2}
\end{align}
\normalsize 
The parameter $0\le \lambda < 1$ enforces a minimum frequency of $1/|\SET{D}|$ on every keyword.  
The expression \eqref{eqn:mk001} is adjusted correspondingly to give a new estimator we call \PRIEPS:

\begin{align}
\FPRIHAT (\VEC{c_i}, \omega_k) &= \sum_{w\in{\SET{D}}}\left( \frac{ \phi_{\lambda, \SET{D}}(w | \SET{T}(c_i))}{ \phi_{\lambda, \SET{D}}(w | \SET{T}(\SET{C})) } \cdot  \psi_{\lambda, \SET{D}}(w | {a_k}) \right) \label{eqn:mk001:2}
\end{align}
We will use the \PRIEPS{} estimator, given by \eqref{eqn:mk001:2}, from now on in this paper, unless stated otherwise. In our experiments we find empirically, through verification with the training data, that choosing the parameter $\lambda = 0.001$ worked well.

\subsection{The \PDE{} Estimator}
\label{sec:dpe}
Substituting the \PRIEPS{} estimator $\widehat{\FPRI{}}$  for $\FPRI{}$ in \eqref{eqn:prop:estim:2b} gives the \PDE{} estimator
\small
\begin{align}
	\FPD{} &= \left\vert\log\left(\frac{\widehat{\FPRI[k]}(\VEC{c}_i, \omega_{k})}{\widehat{\FPRI[k]}(\VEC{c}_{-i}, \omega_{k})}\frac{\widehat{\FPRI[1]}(\VEC{c}_{-i}, \omega_{1})}
		{\widehat{\FPRI[1]}(\VEC{c}_i, \omega_{1})}\right)\right\vert
		\label{eqn:dpe:estim:1}
\end{align}
\normalsize 
From Proposition~\ref{prop:estim}, the \PDE{} estimator in \eqref{eqn:dpe:estim:1} can be interpreted directly as the best possible level of \DENY{} a user can claim in the case $m=2$. We report the maximum value of \PDE{} measured by probe step in our experiments to show the worst possible \DENY{} scenario for \U{}. We also report the median value of \PDE{} as a representative bound for approximately $50\%$ of the samples. An example of reporting is shown in Table~\ref{tbl:pde:example:measured} for the reference topic ``gay''.%
\begin{table}[h]
\captionsetup{position=bottom,skip=0pt,belowskip=0pt}
\caption{Measured \FPD for Reference Topic versus Any Other Topic, reported as ``max (median)'', by Probe Query Sequence\label{tbl:pde:example:measured}}
\setlength{\tabcolsep}{0.35cm}
\begin{scriptsize}

\begin{tabular}{@{}lccccc@{}}
\textbf{\shortstack[l]{Reference\\Topic\\{}}}   & \textbf{Probe 1}  & \textbf{Probe 2}   & \textbf{Probe 3}  & \textbf{Probe 4} & \textbf{Probe 5}  \\ 
\cmidrule[0.5pt]{1-6}
\textbf{gay} & 64 (33) & 47 ( 5) & 72 (25) & 48 (25) & 48 (19) \\
 \cmidrule[0.5pt]{1-6}                
 \end{tabular}\\[-4mm]         
              
\end{scriptsize}
\end{table}%
For example, from Table~\ref{tbl:pde:example:measured}, a reported maximum value of \PDE{} of $47\%$ in the second column indicates that the difference in probabilities that \U{} is uniquely interest in the reference topic versus being interested in any other topic is \emph{at least} $47\%$ in the worst case by probe step $5$. The median value of $25\%$ in parentheses in the Probe 3 and 4  columns indicates that the difference in probabilities can be expected to be at least $25\%$ in $50\%$ of cases by probes $3$ and $4$. Overall the results suggest that \DENY{} is unlikely to constitute a reasonable defence in this case. 

Reported values of \PDE{} may increase, or decrease, during a session as individual queries are judged as more, or less, revealing by the \PDE{} estimator. Inspection of the query scripts generated for the topic $c_i=\text{Gay}$, for example, shows that the queries associated with probe step $3$ are \emph{same sex relationships} and \emph{how do i know if I'm gay}, both of which appear revealing. The queries from the test script corresponding to probe steps $4$ and $5$ are \emph{HIV symptoms}, \emph{HIV treatment}, \emph{HIV men} and \emph{aids men} which may not point as distinctly to specific interest in the $c_i=\text{Gay}$ as they could reasonably be associated with health concerns.  

The zeroth probe in a session is always run first, before any other query, to establish a baseline \PRIEPS{} score for the session. As a result the measured \PDE{} values for the zeroth probe is always $0$ for both maximum and median values and is not reported in our results.

One popular approach to designing defences of \DENY{} is to attempt to \emph{hide in the crowd}. For example, by injecting varying degrees of noise in the stream of observations $\{\omega_j\}$ in the hope that \S{} will not detect the true sub-stream of sensitive events. 
In \cite{mac2015don}, the authors observe that varying click patterns is seen to change the absolute volume of adverts appearing on a page. As both user clicks and queries are potential indicators of user interest for an observer we test injected noise from both queries and clicks as possible defence strategies.

An alternative tactic is to invert the previous approach by instead attempting to \emph{hide in plain sight}. By choosing a non-sensitive \emph{proxy topic}, chosen to attract personalised content \U{} can then carefully hide true, sensitive queries in a stream of proxy topic queries.
By demonstrating clear interest in a \emph{proxy} non-sensitive topic \U{} may tip the balance of probability toward the proxy topic by drawing the attention of the observer \S{}.  
%


\section{Experimental Results}
\label{sec:experiment:results}
\subsection{Preliminaries}
\label{sec:topical:queries}
To facilitate easy comparison we use the same experimental data collection setup as \cite{mac2015don}. We summarise the key elements here with additional detail in the Appendix and refer the reader to \cite{mac2015don} for full details. 

User interest topic categories taken from \cite{mac2015don}, are used in our experiments. Of the user interest topics, (i) ten are sensitive categories associated with subjects generally identified as causes of discrimination (medical condition, sexual orientation \emph{etc}) or sensitive personal conditions (gambling addiction, financial problems \emph{etc}), (ii) a further sensitive  topic is  related to London as a specific destination location, providing an obviously interesting yet potentially sensitive topic that a recommender system might track, (iii) the last topic is a catch-all category labeled ``Other''. 

To construct sequences of queries for use in test sessions, we select a \emph{probe query}, providing a predefined sampling point for data collection. Numbering the probes in a session starting from $0$, the zeroth query issued in every session is a probe query. The zeroth probe is used to establish the baseline for calculations of the \PDE{} estimator for subsequent probe queries. The \PDE{} estimator, from \eqref{eqn:dpe:estim:1}, of the zeroth probe in a session is $0$ and so is not included in reports of experimental results. Measurements of \PDE{} values are reported for each of the probe queries $1$--$5$ during experiments providing a consistent sample for analysis.  

In our experiments, when implementing the ``Proxy Topic'' defence model, we choose three uninteresting, proxy topics likely to attract adverts, namely \emph{tickets for music concerts}, searching for \emph{bargain vacations} and \emph{buying a new car}. 

All scripts were run for $3$ registered users and $1$ anonymous user on the Google search engine, yielding a data set consisting of $21,861$ probe queries in total across all of the test user interest topics. Test data was divided  into individual test data sets based on different test configurations with each test data set consisting of approximately $1,000$ probe queries. 

A separate hold-back was created for a common training data set of approximately $1,000$ queries. The \PDE{} estimator in \eqref{eqn:dpe:estim:1} uses the training data-set to model the prior background knowledge $\EV{E}_0$. We do not re-train \PDE{} during testing as new adverts are encountered. Experimental measurements of \PDE{} are with respect to the common training set for consistent comparison.

All queries in a test session were automatically labelled with the intended topic of the test session as given by the query script used. For example, all queries from a session about ``prostate`` are labeled as ``prostate'' including probe queries.  In this respect the labels capture intended behaviour of queries, rather than attempting an individual interpretation of specific query keywords during a user session. Test data is automatically divided into $7$ folds for processing so that, reported statistics are taken over $7$ distinct, randomised sub-samples of test data. 
%

Before proceeding to testing with \PDE, we verify \PRIEPS{} by comparing its detection capability with previous results obtained in \cite{mac2015don} for the \PRI{} estimator \POL{and compare the performance of \PRIEPS{} with alternative implementations using Naive Bayes and Support Vector Machine as sensitive topic detectors}. 

Comparison results between \PRI{} and \PRIEPS{} are shown in  Table~\ref{tbl:old:new}. and were produced by processing data taken from \cite{mac2015don} but applying the \PRIEPS{} estimator to decide which topic is detected. For comparison with  \cite{mac2015don}, we declare a topic $c_i$ has been detected during a query session, consisting of $5$ probe queries, if \emph{at least one} of the $5$ probe queries is detected as topic $c_i$. For comparison, detection results for the \PRI{} estimator from Table XIV(b) in \cite{mac2015don}, are reproduced as Table~\ref{tbl:old:new}(b). The True Detection rates using \PRIEPS{} estimator are better or equal for each topic than the rates reported in \cite{mac2015don}. The False Detection rates are also better or equal in the case of all topics tested comparing favourably with the results obtained in \cite{mac2015don}.

\POL{Comparison of \PRIEPS{} with alternative implementations was performed by taking results from Multinomial Naive Bayes (NB) and Linear SVM (SVM) classifiers to estimate the probabilities in the definition of $\FPRI$ in \eqref{eqn:m:u:2}. The intent of the comparison is to determine which of the NB, \PRIEPS{} and SVM estimators detect privacy threats, using the definition of $\FPRI$ in \eqref{eqn:m:u:2}, for test items previously labeled as ``sensitive'' by examining the topic of the query used. To qualify as a privacy threat we choose a value of $e^{\epsilon} > 1.1$. We expect precision to be substantially less than 100\% for all estimators because the threshold will filter out weaker detections where $1.0 < e^{\epsilon} \leq 1.1$.}
\begin{figure}[h]
    \centering   
        \includegraphics[scale=0.55]{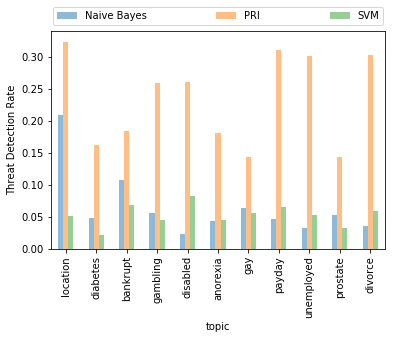} 
        \caption{Comparison of Naive Bayes, \PRIEPS{} and Support Vector Machine estimators. (as Threat Detection Rate by Topic)\label{fig:compare:nb:pri:svm}}   
 \end{figure}%
\POL{Other than varying how $\FPRI$ was estimated, all other inputs and calculations were identical. A common test data set was constructed by selecting 5,500 result pages for each sensitive topic and then randomly selecting an additional 5,500 result pages labeled for the non-sensitive topic.  In this way each sensitive topic had a balanced verification data set of 11,000 labeled items. Each verification data-set was divided randomly into $20\%-80\%$ test--training sets and calculations repeated 5 times for 5-fold verification of each of the NB, \PRIEPS{} and SVM estimators. The Multinomial Naive Bayes and Linear SVC modules from the Python Sklearn package were used to construct the NB and SVM estimators, \cite{scikit-learn}.}
\POL{After common preprocessing each of the NB, \PRIEPS{} and SVM classifiers were trained and probability estimates captured for the 5-fold test data-sets. A threat is declared ``detected'' if the calculated valued of $\FPRI$ for the sensitive topic exceeds $1.0$. Precision of sensitive topic threat detection is shown by topic in Figure~\ref{fig:compare:nb:pri:svm} for the NB, \PRIEPS{} and SVM approaches.} 

\POL{The results Figure~\ref{fig:compare:nb:pri:svm} indicate that that the \PRIEPS{} estimator detects significantly more true-positive detection results than either of the NB or SVM estimators for all sensitive topics tested. The initial detection sensitivity of each of these estimators is influenced by the labelling assigned to examples in the training set. We adopt the perspective that privacy tools should err on the side of caution so that high detection sensitivity in the initial ``out of the box'' stage is a prudent approach. In a real-world application of \PRIEPS{} the user would provide incremental training examples over time reflecting their tolerance of privacy risk and so tune \PRIEPS{}. }
\subsection{Establishing a Baseline} 
\label{sec:test:setup}
We begin with a sequences of queries, interleaved with probe queries, in what we term a ``no click, no noise'' model. Here there is no injected noise and no items are clicked on any of the search results pages. This model provides a baseline, where the queries alone are available to the recommender to learn about a user session as it progresses. Measurements of \PDE{} for all topics using the ``no click, no noise'' model are shown in Table~\ref{tbl:base:noclick:nonoise}. 
\begin{table}[h]
\captionsetup{position=bottom,skip=0pt,belowskip=0pt}
\caption{Measured \FPD for Reference Topic versus Any Other Topic, reported as ``max (median)'', by Probe Query Sequence\label{tbl:base:noclick:nonoise}}
\setlength{\tabcolsep}{0.30cm}
\begin{scriptsize}

\subfloat[No Click, No Noise]{

\begin{tabular}{@{}lccccc@{}}
\textbf{\shortstack[l]{Reference\\Topic\\{}}}   & \textbf{Probe 1}  & \textbf{Probe 2}   & \textbf{Probe 3}  & \textbf{Probe 4} & \textbf{Probe 5}  \\ 
\cmidrule[0.5pt]{1-6}
\textbf{anorexia} & 56 (52) & 56 (52) & 56 (52) & 56 (52) & 56 (52) \\
\textbf{bankrupt} &  1 ( 1) & 55 (43) & 55 (39) & 58 (48) & 56 (48) \\
\textbf{diabetes} & 40 (38) & 40 (38) & 40 (38) & 40 (38) & 40 (38) \\
\textbf{disabled} &  9 ( 9) &  9 ( 9) &  9 ( 9) & 40 (40) & 40 (33) \\
\textbf{divorce} & 41 (31) & 75 (65) & 56 (46) & 79 (68) &  79 (68) \\
\textbf{gambling} & 16 (12) & 18 (16) & 66 ( 4) & 57 (17) & 18 ( 3) \\
\textbf{gay} & 64 (33) & 47 ( 5) & 72 (25) & 48 (25) & 48 (19) \\
\textbf{location} & 10 ( 2) & 11 ( 3) & 11 (10) & 18 ( 7) & 18 ( 9) \\
\textbf{payday} &  2 ( 2) &  2 ( 2) & 21 ( 2) &  2 ( 2) &  2 ( 2) \\
\textbf{prostate} & 52 (17) & 52 (17) & 52 (17) & 52 (17) & 52 (17) \\
\textbf{unemployed} &  7 ( 5) &  7 ( 6) &  7 ( 6) & 13 ( 7) &  7 ( 7) \\
 \cmidrule[0.5pt]{1-6}                
 \end{tabular}        
}\\[-4mm]              
\end{scriptsize}
\end{table}%
For the health-related topics $\text{Anorexia, Diabetes, Prostate, Bankrupt, Divorced, Gay}$ the reported results are high, indicating lack of  plausible deniability for each of these topics. 
 It is concerning that personal circumstances, health status and sexual orientation appear to be the most revealing topics according to our experiments. In the case of the topic $\text{Disabled}$ there is more cause of concern about \DENY{} as the session progresses. On inspection of the associated query script this appears to be again related to the specificity of the queries at each probe step. At the beginning of this script the queries are related to availability of services -- for example, \emph{locations of disabled parking} -- while later queries are more specific to named conditions -- for example, \emph{treatment for spina bifida}. 

The topics $\{Location, Payday, Unemployed\}$ appear among the topics of least concern from the perspective of \DENY{}. Both of the topics Payday and Unemployed asked queries about availability of social support services whereas queries for the topic Bankrupt asked about availability of paid professional services such as lawyers and accountants. It is perhaps an illustration of the motivations of a for-profit service where users seeking social supports are of less interest than users seeking expensive paid services. 

Overall, measurements of \PDE{} in experiments appear to agree with expectations from inspection of the underlying queries.  
 Our results suggest that queries are a strong signal to the observer of user interest, and that estimates from \PDE{} appear to distinguish queries that are strongly revealing of specific topic interest from more generic queries where plausible deniability is clearer. 

\subsection{The Effect of Random Noise Injection}
\label{sec:testing:noise}

Following from Section~\ref{sec:by:design}, we now consider the impact of injecting non-informative queries chosen at random from our popular query list into a user session. We simply refer to these as ``random noise'' queries. We consider three levels of random noise queries for testing purposes:

\begin{description}[style=unboxed,leftmargin=0cm]
	\item [``Low Noise''] The automation scripts select uninteresting queries uniformly at random from the  top-query list and inject a single random noise query after every topic-specific query so that the ``signal-to-noise ratio'' of sensitive to noise queries in this case is $1:1$.
	\item [``Medium Noise''] Here the automation scripts inject two randomly selected  queries after each topic-specific query for a signal to noise ration of $1:2$.
	\item [``High Noise''] In this noise-model with the highest noise setting, three random noise queries are injected, resulting in a signal-to-noise ratio of $1:3$.
\end{description} 

Note also that the automation scripts were configured to ensure the relevant number of noise queries was always injected \emph{immediately before} each probe query.   Our intention was to construct a ``worst case'' for detection of learning, where probe queries are always separated from sensitive user queries by the specified number of noise queries.

\begin{table}[h]
\captionsetup{position=bottom,skip=0pt,belowskip=0pt}
\caption{Measured \FPD for Reference Topic versus Any Other Topic, reported as ``max (median)'', by Probe Query Sequence\label{tbl:base:noclick:allnoise}}
\setlength{\tabcolsep}{0.30cm}
\begin{scriptsize}

\subfloat[No Click, Low Noise]{

\begin{tabular}{@{}lccccc@{}}
\textbf{\shortstack[l]{Reference\\Topic\\{}}}   & \textbf{Probe 1}  & \textbf{Probe 2}   & \textbf{Probe 3}  & \textbf{Probe 4} & \textbf{Probe 5}  \\ 
\cmidrule[0.5pt]{1-6}
\textbf{anorexia} & 54 (45) & 54 (45) & 54 (45) & 54 (45) & 54 (45) \\
\textbf{bankrupt} & 16 ( 9) & 56 (50) & 52 (39) & 54 (45) & 56 (45) \\
\textbf{diabetes} & 46 (35) & 46 (35) & 46 (35) & 46 (35) & 46 (35) \\
\textbf{disabled} &  9 ( 3) &  9 ( 8) &  9 ( 7) & 33 ( 7) & 40 (32) \\
\textbf{divorce} & 13 ( 7) & 123 ( 8) & 54 ( 8) & 85 ( 6) &  85 ( 6) \\
\textbf{gambling} & 18 (16) & 18 (16) & 52 (18) & 18 (10) & 18 (18) \\
\textbf{gay} & 73 (61) & 73 (70) & 76 (46) & 79 (74) & 79 (70) \\
\textbf{location} & 18 (16) & 18 (10) & 18 (10) & 18 (10) & 18 (10) \\
\textbf{payday} &  3 ( 2) &  3 ( 2) &  4 ( 3) &  4 ( 3) &  4 ( 3) \\
\textbf{prostate} & 21 (16) & 21 (16) & 21 (16) & 21 (16) & 21 (16) \\
\textbf{unemployed} &  7 ( 3) &  7 ( 3) & 13 ( 9) & 13 ( 9) & 13 ( 9) \\
\cmidrule[0.5pt]{1-6}                
\end{tabular}        
} \\[-3mm]
\subfloat[No Click, Med Noise]{

\begin{tabular}{@{}lccccc@{}}
\textbf{\shortstack[l]{Reference\\Topic\\{}}}   & \textbf{Probe 1}  & \textbf{Probe 2}   & \textbf{Probe 3}  & \textbf{Probe 4} & \textbf{Probe 5}  \\ 
\cmidrule[0.5pt]{1-6}
\textbf{anorexia} & 55 (53) & 53 (53) & 53 (53) & 53 (53) & 53 (53) \\
\textbf{bankrupt} & 11 ( 8) & 48 (33) & 51 (43) & 52 (38) & 52 (38) \\
\textbf{diabetes} & 38 (38) & 38 (38) & 38 (38) &  38 (38) &  38 (38) \\
\textbf{disabled} &  4 ( 4) &  8 ( 7) &  1 ( 1) & 40 (36) & 40 (36) \\
\textbf{divorce} & 19 ( 9) & 65 (31) & 44 (31) & 72 (50) &  72 (50) \\
\textbf{gambling} & 18 (16) & 18 (17) & 18 (18) & 31 ( 3) & 18 (10) \\
\textbf{gay} & 89 (68) & 89 (69) & 88 (64) & 93 (73) & 93 (64) \\
\textbf{location} & 18 (10) & 18 (10) & 18 ( 7) & 18 ( 7) & 10 ( 7) \\
\textbf{payday} &  6 ( 3) &  6 ( 3) &  6 ( 3) &  6 ( 2) &  6 ( 1) \\
\textbf{prostate} & 32 (14) & 32 (14) & 18 (13) & 18 (13) & 18 (13) \\
\textbf{unemployed} & 13 ( 5) & 13 (10) & 13 ( 7) & 13 ( 9) &  7 ( 4) \\
 \cmidrule[0.5pt]{1-6}                
 \end{tabular}        
} \\[-3mm]
\subfloat[No Click, High Noise]{

\begin{tabular}{@{}lccccc@{}}
\textbf{\shortstack[l]{Reference\\Topic\\{}}}   & \textbf{Probe 1}  & \textbf{Probe 2}   & \textbf{Probe 3}  & \textbf{Probe 4} & \textbf{Probe 5}  \\ 
\cmidrule[0.5pt]{1-6}
\textbf{anorexia} & 48 (48) & 48 (48) & 48 (48) & 48 (48) & 48 (48) \\
\textbf{bankrupt} & 16 (10) & 65 (51) & 65 (48) & 65 (49) & 65 (49) \\
\textbf{diabetes} & 41 (38) & 41 (38) & 41 (38) & 41 (38) & 41 (38) \\
\textbf{disabled} &  9 ( 9) &  9 ( 9) &  9 ( 5) &  9 ( 7) &  9 ( 8) \\
\textbf{divorce} & 41 (27) & 75 (38) & 56 (22) & 75 (29) &  75 (29) \\
\textbf{gambling} & 21 (16) & 21 ( 3) & 21 ( 4) & 29 (16) & 18 ( 4) \\
\textbf{gay} & 86 (64) & 86 (64) & 80 (43) & 94 (59) & 94 (59) \\
\textbf{location} & 10 (10) &  8 ( 8) &  8 ( 8) & 18 (13) & 18 (13) \\
\textbf{payday} &  3 ( 2) &  4 ( 2) &  4 ( 2) &  4 ( 2) &  3 ( 1) \\
\textbf{prostate} & 17 (15) & 17 (15) & 17 (15) & 17 (15) & 17 (15) \\
\textbf{unemployed} & 10 ( 7) & 13 ( 7) & 13 ( 7) & 13 ( 7) & 13 ( 7) \\
 \cmidrule[0.5pt]{1-6}                
 \end{tabular}        
} \\[-6mm]                          
\end{scriptsize}
\end{table}%

Table~\ref{tbl:base:noclick:allnoise}(a-c) shows the measured \PDE{} values for Low, Medium and High levels of noise respectively for the ``no click'' model. The \PDE{} values for all levels of noise are similar to the ``no click, no noise'' baseline values in Table~\ref{tbl:base:noclick:nonoise}. 

Overall, there is no consistent reduction in values across all topics for all noise levels, indicating that injecting random noise queries does not have a consistent effect. In some cases, such as topic $\text{Gay}$, measured values of \PDE{} increase for all noise levels indicating that noise injection \emph{worsens} the user's ability to assert \DENY{}.

These results indicate that even the ``High Noise'' model fails to reduce the measured values of \PDE{} in a coherent way, so that injecting random noise has not improved plausible deniability significantly with any consistency. We conclude that injection of random noise, even at substantial levels, is not observed to provide a useful defence for plausible deniability in our experiments. 

\subsection{The Effect of Click Strategies}
\label{sec:testing:clicks}

We now consider whether it is possible to disrupt search engine learning by careful clicking of the links on response pages.  
Intuitively, from the search engine's point of view, clicking on links is a form of active feedback by a user and so potentially informative of user interests.  This is especially true when, for example, a user is carrying out exploratory search where their choice of keywords is not yet well-tuned to their topic of interest.  Previous studies have also indicated that there is good reason to believe that user clicks on links are an important input into recommender system learning. In \cite{mac2015don} (Section $6.4$), user clicks emulated using the ``Click Relevant'' click-model were reported to result in increases of $60\%$ -- $450\%$ in the advert \emph{content}, depending on the ``Sensitive'  topic tested.

We consider four different click strategies to emulate a range of user click behaviours:

\begin{description}[style=unboxed,leftmargin=0cm]
	\item [``No Click''] No items are clicked on in the response page to a query. This user click-model does not provide additional user preference information to the recommender system due to click behaviour. This click model is used in the baseline measurements presented in Sections~\ref{sec:test:setup}.
	\item [``Click Relevant''] Given the response page to a query, for each search result and advert we calculate the Term-Frequency (TF) of the visible text with respect to the keywords associated with the test session topic of interest. When $TF > 0.1$ for an item, the item is clicked, otherwise it is not clicked. This user click-model provides relevant feedback to the recommender system about the information goal of the user.
	\item [``Click Non-relevant''] TF is calculated for each item with respect to the category of interest for the session in question as for the ``Click Relevant'' click-model, \emph{except} that items are clicked when the TF score is below the threshold and so they are deemed non-relevant to the topic, that is when $TF \leq 0.1$. This user click-model attempts to confuse the recommender system by providing feedback that is not relevant to the true topic of interest to the user.
	\item [``Click All''] All items on the response page for a query are clicked. This user click-model gives the recommender system a ``noisy'' click signal, including clicks on items relevant and non-relevant to the user's information goal.
	\item [``Click 2 Random Items''] Two items appearing on the response page for a query are selected uniformly at random with replacement and clicked.  
\end{description} 

In all cases, when uninteresting, noise queries are included in a query session, the relevant user click-strategy is also applied to the result pages of these queries. In this way we hope to avoid providing an obvious signal to the recommender system that might differentiate uninteresting queries from queries related to  sensitive topics.  Items on the result page in response to probe queries are \emph{not} clicked so that the probe query does not provide any additional information to the recommender system.

\begin{table}[h]
\captionsetup{position=bottom,skip=0pt,belowskip=0pt}
\caption{Measured Plausible Deniability versus any other tested topics as probability of interest, by Probe Query Sequence when the true topic of interest is ``Other'' with range $(\mu \pm 3\sigma)$\label{tbl:base:allclicks:nonoise}}
\setlength{\tabcolsep}{0.30cm}
\begin{scriptsize}

\subfloat[Click Relevant, No Noise]{

\begin{tabular}{@{}lccccc@{}}
\textbf{\shortstack[l]{Reference\\Topic\\{}}}   & \textbf{Probe 1}  & \textbf{Probe 2}   & \textbf{Probe 3}  & \textbf{Probe 4} & \textbf{Probe 5}  \\ 
\cmidrule[0.5pt]{1-6}
\textbf{anorexia} & 59 (50) & 59 (50) & 59 (50) & 59 (50) & 59 (50) \\
\textbf{bankrupt} & 16 ( 8) & 65 (42) & 65 (36) & 59 (40) & 54 (38) \\
\textbf{diabetes} & 36 (36) & 36 (36) & 36 (36) & 36 (36) & 36 (36) \\
\textbf{disabled} &  7 ( 4) &  7 ( 4) &  9 ( 9) & 40 ( 4) & 40 ( 7) \\
\textbf{divorce} & 30 (24) & 30 ( 9) & 30 ( 9) & 30 ( 8) &  30 ( 8) \\
\textbf{gambling} &  6 ( 0) & 18 (16) & 32 (16) & 18 (16) & 18 ( 5) \\
\textbf{gay} & 92 (51) & 92 (77) & 78 (51) & 94 (72) & 94 (80) \\
\textbf{location} & 18 (18) & 10 (10) & 10 (10) & 18 (10) & 18 (10) \\
\textbf{payday} &  2 ( 2) &  2 ( 2) &  3 ( 2) &  3 ( 2) &  2 ( 2) \\
\textbf{prostate} & 17 (17) & 17 (17) & 17 (17) & 17 (17) & 17 (17) \\
\textbf{unemployed} & 13 ( 2) & 13 ( 4) & 13 ( 7) & 13 ( 7) &  7 ( 6) \\
 \cmidrule[0.5pt]{1-6}                
 \end{tabular}        
} \\[-3mm]
\subfloat[Click Non-relevant, No Noise]{

\begin{tabular}{@{}lccccc@{}}
\textbf{\shortstack[l]{Reference\\Topic\\{}}}   & \textbf{Probe 1}  & \textbf{Probe 2}   & \textbf{Probe 3}  & \textbf{Probe 4} & \textbf{Probe 5}  \\ 
\cmidrule[0.5pt]{1-6}
\textbf{anorexia} & 18 ( 5) & 22 (12) & 26 ( 5) & 31 (13) & 32 ( 6) \\
\textbf{bankrupt} & 57 ( 3) & 53 (36) & 50 (34) & 43 (33) & 48 (36) \\
\textbf{diabetes} &  4 ( 2) & 13 ( 8) & 11 ( 8) &  5 ( 3) & 11 ( 2) \\
\textbf{disabled} &  5 ( 2) &  6 ( 2) &  9 ( 3) & 29 (10) & 26 ( 8) \\
\textbf{divorce} & 49 (25) & 51 (33) & 49 (30) & 43 (29) &  43 (29) \\
\textbf{gambling} &  6 ( 2) & 18 ( 4) & 36 (24) & 35 (13) & 31 (13) \\
\textbf{gay} & 36 (33) & 75 (33) & 51 (32) & 39 (20) & 31 (27) \\
\textbf{location} &  9 ( 2) & 11 ( 1) &  7 ( 2) &  6 ( 2) &  9 ( 1) \\
\textbf{payday} &  3 ( 3) &  3 ( 1) &  4 ( 2) &  3 ( 2) &  4 ( 3) \\
\textbf{prostate} & 55 (38) & 68 (36) & 65 (48) & 61 (48) & 64 (42) \\
\textbf{unemployed} &  9 ( 1) &  6 ( 6) &  7 ( 1) &  9 ( 4) &  5 ( 2) \\
 \cmidrule[0.5pt]{1-6}                
 \end{tabular}        
} \\[-3mm]
\subfloat[Click All, No Noise]{

\begin{tabular}{@{}lccccc@{}}
\textbf{\shortstack[l]{Reference\\Topic\\{}}}   & \textbf{Probe 1}  & \textbf{Probe 2}   & \textbf{Probe 3}  & \textbf{Probe 4} & \textbf{Probe 5}  \\ 
\cmidrule[0.5pt]{1-6}
\textbf{anorexia} & 66 (57) & 66 (57) & 66 (57) & 66 (57) & 66 (57) \\
\textbf{bankrupt} &  51 (42) & 51 (42) & 51 (42) & 55 (46) & 56 (46) \\
\textbf{diabetes} & 35 (35) & 35 (35) & 35 (35) & 35 (35) & 35 (35) \\
\textbf{disabled} &  9 ( 9) &  9 ( 9) &  9 ( 9) & 31 (31) & 31 (31) \\
\textbf{divorce} & 30 ( 8) & 73 (54) & 54 (34) & 100 (49) &  100 (49) \\
\textbf{gambling} &  3 ( 1) & 16 (16) & 53 (11) & 16 ( 6) &  6 ( 2) \\
\textbf{gay} & 69 (65) & 77 (73) & 70 (60) & 82 (75) & 81 (71) \\
\textbf{location} & 18 (10) & 10 ( 6) & 10 ( 6) & 14 (10) & 18 ( 7) \\
\textbf{payday} &  2 ( 2) &  2 ( 2) &  2 ( 2) &  2 ( 2) &  2 ( 2) \\
\textbf{prostate} & 17 (17) & 17 (17) & 17 (17) & 17 (17) & 17 (17) \\
\textbf{unemployed} &  4 ( 4) &  7 ( 7) &  7 ( 7) &  7 ( 7) &  7 ( 6) \\
 \cmidrule[0.5pt]{1-6}                
 \end{tabular}        
} \\[-3mm]
\subfloat[Click 2 Random Items, No Noise]{

\begin{tabular}{@{}lccccc@{}}
\textbf{\shortstack[l]{Reference\\Topic\\{}}}   & \textbf{Probe 1}  & \textbf{Probe 2}   & \textbf{Probe 3}  & \textbf{Probe 4} & \textbf{Probe 5}  \\ 
\cmidrule[0.5pt]{1-6}
\textbf{anorexia} & 50 (12) & 27 ( 9) & 26 ( 9) & 36 (10) & 33 (11) \\
\textbf{bankrupt} &  5 ( 3) & 43 (33) & 39 (37) & 36 (35) & 38 (35) \\
\textbf{diabetes} & 38 ( 6) & 18 ( 7) & 17 ( 5) & 17 ( 7) & 11 ( 5) \\
\textbf{disabled} &  2 ( 1) &  4 ( 1) &  5 ( 3) & 39 (25) & 40 (25) \\
\textbf{divorce} & 24 (17) & 37 (31) & 37 (31) & 35 (25) &  35 (25) \\
\textbf{gambling} & 24 ( 0) &  7 ( 4) & 54 (23) & 33 (23) & 68 (20) \\
\textbf{gay} & 68 (68) & 68 (65) & 54 (52) & 46 (36) & 47 (42) \\
\textbf{location} &  8 ( 8) &  8 ( 8) &  8 ( 8) &  8 ( 8) &  8 ( 8) \\
\textbf{payday} &  4 ( 1) &  2 ( 2) &  4 ( 2) &  4 ( 3) &  4 ( 4) \\
\textbf{prostate} & 59 (57) & 67 (62) & 58 (56) & 60 (54) & 51 (44) \\
\textbf{unemployed} &  4 ( 3) &  8 ( 3) & 10 ( 4) &  3 ( 2) & 10 ( 1) \\
 \cmidrule[0.5pt]{1-6}                
 \end{tabular}        
} \\[-6mm]                                     
\end{scriptsize}
\end{table}%

Measure values of \PDE{} are shown in Table~\ref{tbl:base:allclicks:nonoise}. As random noise injection had no observable effect on measurements of \PDE{} for different click models in experiments, only the ``No Noise'' results are presented here for space reasons.

Taken overall, the results in Table~\ref{tbl:base:allclicks:nonoise}(a) for the ``non-relevant click, no noise'' model suggest clicking on non-relevant advert items is the best strategy of the click models tested. The only difference between the ``non-relevant click'' model and other click models  is that non-relevant items \emph{only} are clicked, whereas in other click models it is possible that relevant items are clicked. It seems reasonable to postulate that clicking on relevant items provides ``fine-tuned'' feedback about user interests which is more informative for the observer. Clicking on non-relevant items may divert attention to a modest degree, but not to the extent of masking the sensitive topic revealed by the query. 

Comparing the baseline ``No Click'' \PDE{} observations in Table~\ref{tbl:base:noclick:nonoise} each of the subtables in Table~\ref{tbl:base:allclicks:nonoise} shows similar lack of consistency to the noise injection models. In out experiments there is no consistent change observed in \PDE{} across topics due to variation in the click patterns tested. As with the noise injection case, there are sporadic increases and decreases in values of \PDE{} but the lack of overall consistency makes using click models as a defence impractical. 

It would appear in summary, that clicks transmit information to the observer, but not as consistently as does a revealing query. Consequently none of the user click-models tested appear to change the baseline level of plausible deniability associated with the query in a predictable way so that there is no globally discernible pattern with which to construct practical defence tools based on clicks.

\subsection{The Effect of Proxy Topics}
\label{sec:proxy:topic}

The next privacy protection strategy we consider is the introduction of proxy topics. In this case sequences of queries, with each sequence related to a single proxy topic which is not sensitive for the user but capable of attracting personalised advert content, are injected into a user session.   The idea here is that each such sequence of queries emulates a user session where the proxy topic is the topic of interest. In this way we hope to misdirect learning by the search engine of user interests.  The results in Section \ref{sec:testing:noise} are relevant here since they suggest that isolated, individual queries -- such as randomly selected noise queries -- tend not to provoke search engine learning. Our hope is that this can be exploited by inverting the notion of random noise injection so that individual \emph{sensitive} queries are injected as the noise in proxy topic sessions. Isolated sensitive queries will hopefully not provoke learning whereas the larger number of uninteresting proxy sessions will. In this way we can misdirect learning by the observer.

In out tests the following proxy topics are used:
\begin{description}[style=unboxed,leftmargin=0cm]
	\item[\textbf{Tickets}] Searching for tickets for events in a well-known local stadium
	\item[\textbf{Vacation}] Queries related to a vacation such as flights and accommodation.
	\item[\textbf{Car}] Searches by a user seeking to trade in and change their car.
\end{description}
and related queries are constructed by selecting related keywords through the same process as was used for the sensitive topics.

\begin{table}[h]
\captionsetup{position=bottom,skip=0pt,belowskip=0pt}
\caption{Measured Plausible Deniability versus any other tested topics as probability of interest, by Probe Query Sequence when the true topic of interest is ``Other'' with range $(\mu \pm 3\sigma)$\label{tbl:base:proxy:topics}}
\setlength{\tabcolsep}{0.30cm}
\begin{scriptsize}

\subfloat[All Click and Noise Models]{

\begin{tabular}{@{}lccccc@{}}
\textbf{\shortstack[l]{Reference\\Topic\\{}}}   & \textbf{Probe 1}  & \textbf{Probe 2}   & \textbf{Probe 3}  & \textbf{Probe 4} & \textbf{Probe 5}  \\ 
\cmidrule[0.5pt]{1-6}
\textbf{all topics} &  0 ( 0) &  0 ( 0) &  0 ( 0) &  0 ( 0) &  0 ( 0) \\
%
 \cmidrule[0.5pt]{1-6}                
 \end{tabular}        
} \\[-3mm]
\end{scriptsize}
\end{table}%
Proxy topic query scripts where constructed by selecting a sensitive topic, and then selecting an uninteresting proxy topic from the list of $3$ proxy topics. Having decided on a sensitive query we wish to issue, we select at least three and no more than four queries related to the proxy topic from a prepared list of proxy topic queries. We next randomly shuffle the order of the selected sensitive and proxy topic queries. In this way there is always a subgroup of at least two proxy topic queries next to each other in each query session. Finally, for testing purposes, we place a probe query before and after each block of 3-4 proxy + 1 sensitive queries to measure changes in \PRIEPS{} score. We repeat this exercise using the same proxy topic until a typical query session consisting of $5$ probe queries is created. 

Data was collected for $2,300$ such proxy topic sessions. This included each  of the sensitive topics and each of the click models described in previous sections. The same \PRIEPS{} and \PDE{} setup, including the same training set, as before was used to process the search results.  

Measured detection rates are shown in Table~\ref{tbl:base:proxy:topics}. The measured probability calculated from \PDE{} is $0$ for all topics and for all click-models tested.  That is, we find it is possible to claim full plausible deniability of interest in all of the topics tested. Since our detection approach is demonstrated to be notably sensitive to observer learning in earlier sections, we can reasonably infer that this result is not due to a defect in the detection methodology but rather genuinely reflects successful misdirection of the search engine away from sensitive topics. 

This result is encouraging, especially in light of the negative results in previous sections for other obfuscation approaches.  It suggests use of sequences of queries on uninteresting proxy topics may provide a defence of plausible deniability. The trade-offs for the user include the overhead of maintaining proxy topics and associated queries and the additional resources required to issue proxy topic queries in a consistent way. However since both of these tasks were readily automated during our testing it seems reasonable that these trade-offs could be readily managed by software in a way that is essentially transparent to the user.
\section{Conclusions and Discussion}
\label{sec:conclusions}
Our observations suggest that modern systems, such as Google, are able to identify user interests with high accuracy, exploit multiple signals, filter out uninteresting noise queries and adapt quickly when topics change. Furthermore learning appears to be sustained over the lifetime of query sessions. The power and sophistication of these systems make designing a robust defence of user privacy non-trivial.

The \PDE{} estimator was tested via a comprehensive measurement program using online search engines to show that topic learning results in measurable impacts on the ability of a user to deny their interest in all sensitive topics tested. We find that revealing queries provide a significant signal for search engine adaptation. While user clicks provide additional feedback, we do not observe the same degree of associated learning with click behaviour as is observed with revealing queries. Overall, testing with \PDE{} suggests that defences based on random noise injection and variable click models do not provide a reliable strategy for defence of plausible deniability. 

By contrast, our experiments show that proxy topics that are uninteresting to the user but capable of generating commercial content provide observable privacy protection in our experiments. Wrapping sensitive queries in a stream of coherent proxy topic queries appears to distract the online system into adapting to the proxy topic while allowing the sensitive query noise to slip through. Our observation that proxy topics provide some relief indicates that defence of plausible deniability is not impossible, but indicates that increasingly sophisticated approaches are required in the face of ever improving search engine capability. In choosing proxy topics, for example, a user must be careful to not stimulate unintended learning of the proxy topics which may influence the utility of future search results.

Subtle tactics like proxy topics, that exploit the observer's strengths to tip the balance slightly in favour of the user, suggest an interesting avenue for future research. The simplicity of the approach means it should be possible to extend it in several ways, for example, by injecting a range of uninteresting single topic queries as additional noise in the proxy query stream it may be possible to provide additional guarantees of privacy such as \emph{k-anonymity} or \emph{differential privacy} for the sensitive topic. \POL{More investigation of proxy topics is an interesting line of future research. Experiments to compare the effectiveness of different proxy topics including, for example, inclusion of proxy topics that are more relevant to the user's known interests versus proxy topics that are less relevant to user topics. Similarly proxy topics with higher commercial value may have more potential to distract search engine learning than proxy topics with lower commercial value}

As discussed in Section~\ref{sec:related:work}, user click patterns may be used by recommender systems to rank page content, placing content likely to attract user clicks in more prominent positions on pages. In our experiments, we observed changes in volume of advert content on samples of probe query response pages. There are several plausible avenues of investigation that may help explain the mechanism behind this, such as user click patterns and the semantics of the true and noise queries chosen. The approach taken in this paper does not distinguish between items based on rank or order on the page.  How the semantics of queries, the interaction between user click-models and the effect of content ranking may impact user privacy is beyond the scope of this current paper and an avenue for future research.

Overall our results point towards an arms race, where search engine capability is continuously evolving. In this setting, even if injection of proxy topic sessions were to become widely deployed then we can reasonably expect search engines to respond with more sophisticated learning strategies.  Our results also point towards the fact that the text in search queries plays a key role in search engine learning. While perhaps obvious, this observation reinforces the user's need to be circumspect about the queries that they ask if they want to avoid search engine learning of their interests.

\bibliographystyle{unsrt}
\begin{scriptsize}
\bibliography{noisy}
\end{scriptsize}
%
\appendices
\section{Additional Results}
\label{sec:additional:1}
\small
\begin{lemma}
\label{lemma:1}
For $x,y,\epsilon \in \RPLUS$ with $0<x,y<1$ 
\begin{align}
e^{-\epsilon} < \frac{x}{y} < e^{\epsilon} \implies \vert x-y \vert < \epsilon
\label{eqn:lemma:000}	
\end{align}	
\end{lemma}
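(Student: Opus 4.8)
The plan is to reduce the multiplicative hypothesis to an additive one by taking logarithms, and then to exploit the elementary fact that $\ln$ \emph{expands} distances on the interval $(0,1)$. First I would note that the hypothesis $e^{-\epsilon} < x/y < e^{\epsilon}$ is symmetric under swapping $x$ and $y$ (it is equivalent to $e^{-\epsilon} < y/x < e^{\epsilon}$), so without loss of generality I may assume $x \ge y$; the case $x = y$ gives $\vert x - y\vert = 0 < \epsilon$ immediately, so assume $x > y$. Applying the increasing function $\ln$ to the hypothesis then yields $0 < \ln x - \ln y < \epsilon$.

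The heart of the argument is the estimate $\ln x - \ln y > x - y$ for $0 < y < x < 1$. I would prove this by writing $\ln x - \ln y = \int_{y}^{x} t^{-1}\,dt$ and observing that on the interval of integration we have $t < 1$, hence $t^{-1} > 1$; therefore $\int_{y}^{x} t^{-1}\,dt > \int_{y}^{x} 1\,dt = x - y$. Chaining this with the previous estimate gives $\vert x - y\vert = x - y < \ln x - \ln y < \epsilon$, which is the claim. One could equally invoke the mean value theorem: $\ln x - \ln y = (x-y)/\xi$ for some $\xi \in (y,x) \subset (0,1)$, so $x - y = \xi(\ln x - \ln y) < \ln x - \ln y < \epsilon$.

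I expect no real obstacle here, but the one subtlety worth flagging is why the obvious direct bound fails: from $x/y < e^{\epsilon}$ together with $y < 1$ one only gets $x - y < y(e^{\epsilon}-1) < e^{\epsilon}-1$, and $e^{\epsilon}-1 > \epsilon$, so this estimate is too weak. The improvement comes entirely from \emph{also} using the constraint $x < 1$ (equivalently, that $t^{-1} > 1$ throughout $[y,x]$) rather than discarding it; keeping both $x < 1$ and $x/y < e^{\epsilon}$ in play is the only ingredient the proof genuinely requires.
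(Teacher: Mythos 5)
Your proof is correct, but it follows a different route from the paper's. The paper stays entirely multiplicative and algebraic: it rewrites the hypothesis as $ye^{-\epsilon} < x$ and $xe^{-\epsilon} < y$, applies the elementary bound $e^{-\epsilon} > 1-\epsilon$ to get $y - x < y\epsilon$ and $x - y < x\epsilon$, and then uses $x,y<1$ to conclude $\vert x-y\vert < \epsilon$; no logarithms appear and both sides of the bound are handled simultaneously. You instead symmetrize (WLOG $x>y$), pass to logarithms so that only the single inequality $\ln x - \ln y < \epsilon$ is in play, and invoke the expansiveness of $\ln$ on $(0,1)$ via $\ln x - \ln y = \int_{y}^{x} t^{-1}\,dt > x-y$ (or the mean value theorem). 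The two arguments rest on the same first-order comparison between the exponential/logarithm and the identity, so they are close cousins, but your decomposition makes it more transparent which hypothesis does the work: after the WLOG the lower bound $e^{-\epsilon} < x/y$ is automatic, and your closing remark correctly identifies that $x<1$ (not just $y<1$) is the essential ingredient, since the claim genuinely fails without the restriction to $(0,1)$. The paper's version buys brevity and avoids any case split; yours buys a cleaner accounting of which bounds are actually used and a built-in sanity check against the too-weak direct estimate $x-y < e^{\epsilon}-1$.
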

\begin{proof}
Assuming the left hand side of \eqref{eqn:lemma:000} holds
\begin{align*}
&e^{-\epsilon} < \frac{x}{y} < e^{\epsilon}
\iff ye^{-\epsilon} < x \text{ and } y > xe^{-\epsilon}
\\
&\implies y(1-\epsilon) < x \text{ and } y > x(1-\epsilon)
\text{ (Since $e^{-x} > 1-x$)} \\
&\iff y-x<y\epsilon \quad\text{and}\quad x-y < x\epsilon
\\
&\iff y-x<\epsilon \quad\text{and}\quad x-y < \epsilon
\text{ (Since $x,y < 1$)} \notag \\
&\iff -\epsilon < x-y < \epsilon
\iff \vert x-y \vert < \epsilon
\end{align*}%
\QED  
\end{proof}%
\normalsize
\begin{table}[h]
\captionsetup{position=bottom,skip=0pt,belowskip=0pt}
\caption{Comparison of measured detection rate of \emph{at least one} individual ``Sensitive' topic in a session of $5$ probes.\label{tbl:old:new}}
\setlength{\tabcolsep}{0.20cm}
\begin{scriptsize}
\subfloat[\PRIEPS]{
\begin{tabular}{@{}lc@{}}
\textbf{\shortstack[l]{Reference\\Topic}}   & \textbf{\shortstack[l]{All\\Topics}}  \\ \cmidrule[0.5pt]{1-2}
    \multicolumn{1}{l}{\textbf{True Detect}}      & 100.0\%  \\
    \multicolumn{1}{l}{\textbf{False Detect}}     & 0.0\%    \\ 
    \cmidrule[0.5pt]{1-2}
\end{tabular}        
} \hfill 
\subfloat[\PRI]{

\begin{tabular}{@{}lc@{}}
\textbf{\shortstack[l]{Reference\\Topic}}   & \textbf{\shortstack[l]{All\\Topics}}  \\ \cmidrule[0.5pt]{1-2}
    \multicolumn{1}{l}{\textbf{True Detect}}      & 97-100.0\%  \\
    \multicolumn{1}{l}{\textbf{False Detect}}     & 4-8\%    \\ 
    \cmidrule[0.5pt]{1-2}
\end{tabular} 
} \\[-6mm]               
\end{scriptsize}
\end{table}%
\end{document}